\newtheorem{theorem}{Theorem} [section]
\newtheorem{corollary}[theorem]{Corollary}
\newtheorem{definition}[theorem]{Definition}
\newtheorem{example}[theorem]{Example}
\newtheorem{lemma}[theorem]{Lemma}
\newtheorem{proposition}[theorem]{Proposition}
\newenvironment{proof}[1][Proof]{\textbf{#1.} }{\ \rule{0.5em}{0.5em}}
\newcommand{\JP}[2]{{\color{blue}#1\ifmmode\msout{#2}\else\sout{#2}\fi}}
\newcommand{\AM}[2]{{\color{green}#1\ifmmode\msout{#2}\else\sout{#2}\fi}}
\newcommand{\LG}[2]{{\color{red}#1\ifmmode\msout{#2}\else\sout{#2}\fi}}
\begin{document}

\author{Luis A. Guardiola\thanks{%
Luis A. Guardiola \newline
Departamento de Fundamentos del An\'{a}lisis Econ\'{o}mico, Universidad de
Alicante, Alicante 03071, Spain. E-mail: luis.guardiola@ua.es} \thanks{%
Corresponding author.}, Ana Meca\thanks{%
Ana Meca \newline
I.U. Centro de Investigaci\'{o}n Operativa. Universidad Miguel Hern\'{a}%
ndez, Edificio Torretamarit. Avda. de la Universidad s.n. 03202 Elche
(Alicante), Spain. E-mail: ana.meca@umh.es} and Justo Puerto\thanks{%
Justo Puerto \newline
Facultad de Matem\'{a}ticas, Universidad de Sevilla, 41012 Sevilla, SPAIN.
e-mail: puerto@us.es} }
\title{Cooperation in lot-sizing problems with heterogeneous costs: the
effect of consolidated periods}
\maketitle

\begin{abstract}
We consider a cooperative game defined by an economic lot-sizing problem
with heterogeneous costs over a finite time horizon, in which each firm
faces demand for a single product in each period and coalitions can pool
orders. The model of cooperation works as follows: ordering channels and
holding and backlogging technologies are shared among the members of the
coalitions. This implies that each firm uses the best ordering channel and
holding technology provided by the participants in the consortium. That is,
they purchase, hold inventory, pay backlogged demand and make orders at the
minimum cost of the coalition members. Thus, firms aim at satisfying their
demand over the planing horizon with minimal operation cost. Our
contribution is to show that there exist fair allocations of the overall
operation cost among the firms so that no group of agents profit from
leaving the consortium. Then we propose a parametric family of cost
allocations and provide sufficient conditions for this to be a stable family
against coalitional defections of firms. Finally, we focus on those periods
of the time horizon that are consolidated and we analyze their effect on the
stability of cost allocations. \medskip

\textbf{Key words:} cost-sharing, lot-sizing, heterogeneous costs,
consolidated period, cooperative game.

\textbf{2000 AMS Subject classification:} 91A12, 90B05
\end{abstract}

\newpage

\section{Introduction}

Lot-sizing is one of the most important and also one of the most difficult
problems in production planning. Although lot sizing and scheduling
problems, and their variants, have been studied by many authors, providing
different solution approaches, looking for more efficient solution
approaches is still a challenging subject. Variants of the lot-sizing
problem (henceforth, LSP) with complex setup and other variants which are
more realistic and practical have received less attention in the literature.
There has been little literature regarding problems such as LSP with
backlogging or with setup times and setup carry-over. Since these problems
are NP-hard, fast and efficient heuristics are required. A good survey on
the subject is Karimi et al (2003).

Also there is little literature for cost-sharing in LSP. Among the pioneers
we mention Van Den Heuvel et al. (2007) which focuses on the cooperation in
economic lot-sizing situations (henceforth ELS-situations) with homogeneous
costs. They consider a homogeneous finite horizon model where cost are all
equal for all the players in each period. Each player must satisfy its
demand in each period by producing or carrying inventory from previous
stages but backlogging is not permitted. The main result in that paper is
that the cooperative games induced by ELS-situations enjoy a nonempty core.
Subsequently, Guardiola et al. (2008, 2009) present a new class of totally
balanced combinatorial optimization games: production-inventory games
(henceforth, PI-games). PI-games bring the essentials of inventory and
combinatorial optimization games. They provide a cooperative approach to
analyze the production and storage of indivisible items being their
characteristic function given as the optimal objective function of a
combinatorial optimization problem. PI-games can be seen as ELS games
without setup costs but with backlogging. Guardiola et al. (2008, 2009)
prove that the Owen set, the set of allocations which are achievable through
dual solutions [see Owen, 1975 and Gellekom et al., 2000] reduces to a
singleton. In addition, that allocation is always in the core and it defines
a population monotonic allocation scheme. This fact motivates the name of
Owen point for this core-allocation on PI-games. The main difference between
the ELS-games by Van Den Heuvel et al. (2007) and PI-games by Guardiola et
al. (2009) is that the former considers set up costs but assume that costs
are the homogeneous for all players in each period. Hence, both situations
(ELS and PI) are, in general, different.

On the other hand, Xu and Yang (2009) present a cost-sharing method that is
competitive, cross-monotonic and approximate cost recovering for an ELS-game
under a weak triangle inequality assumption, along with numerical results
showing the effectiveness of the proposed method. Li et al. (2014) present a
cost-sharing method that is cross-monotonic, competitive, and approximate
cost recovery, for the multi-level ELS-game, under a mild condition. This
result extends that of the recent 1-level ELS-game of Xu and Yang (2009).

Dreschel (2010) focusses on cooperative lot-sizing games in supply chains.
Several models of cooperation in lot-sizing problems of different complexity
that are analyzed regarding theoretical properties like monotonicity or
concavity and solved with the proposed row generation algorithm to compute
core elements; i.e. determining stable and fair cost allocations.

In another paper, Gopaladesikan and Uhan (2011) consider a new class of
cooperative cost-sharing games, associated with the ELS-problem with
remanufacturing options (henceforth, ELSR). They investigate the relative
strength and integrality gaps of various mathematical optimization
formulations of ELSR. Using insights from these results, they study the core
of the associated cost-sharing game and show it is empty, in general.
However, for two special cases- zero setup costs, and large initial quantity
of low cost returns- they find that the cost sharing game has a non-empty
core, and that a cost allocation in the core can be computed in polynomial
time.

Zeng et al. (2011) consider the ELS-game with perishable inventory. In this
cooperative game, a number of retailers that have a known demand through a
fixed number of periods for a same kind of perishable goods collaborate to
place joint orders to a single supplier. They first show that an ELS game
with perishable inventory is subadditive, totally balanced and its core is
non-empty. Then, they propose a core-allocation which allocates the unit
cost to each period as equally as possible. Finally, a numerical example is
given to illustrate the above results.

Tsao et al. (2013) use the Nash game and the cooperation game in an
imperfect production system to investigate the combined effects of
lot-sizing integration, learning effect, and an imperfect production process
on a manufacturer-retailer channel. They also developpe a search procedure
to solve the problem described, the optimal properties and a numerical study
are conducted to seek for structural and quantitative insights into the
relationship between the upstream and downstream entities of the supply
chain. Numerical results indicated that the cooperation game policy created
a higher cost reduction under a wide range of parameter settings.

Finally, Chen and Zhang (2015) consider the ELS-game with general concave
ordering cost. In that paper, the dual variables are understood as the price
of the demand per unit. They show that a core allocation can be computed in
polynomial time under the assumption that all retailers have the same cost
parameters (again homogeneous costs). Their approach is based on linear
programming (LP) duality. Specifically, they prove that there exists an
optimal dual solution that defines an allocation in the core and point out
that it is not necessarily true that every optimal dual solution defines a
core allocation. Toriello and Uhan (2014) also study ELS-games with general
concave ordering costs and show how to compute a dynamic cost allocation in
the strong sequential core of these games, i.e. an allocation over time that
exactly distributes costs and is stable against coalitional defections at
every period of the time horizon.

In this paper we study another class of totally balanced combinatorial
optimization games called setup-inventory games (henceforth, SI-games) that
arises from cooperation in lot-sizing problems with heterogeneous costs.
Each firm faces demand for a single product in each period and coalitions
can pool orders. Firms cooperate which implies that each firm uses the best
ordering channel and holding technology provided by the participants in the
consortium. That is, they purchase, hold inventory, pay backlogged demand
and make orders at the minimum cost among the members of the coalition.
Thus, firms aim at satisfying their demand over the planing horizon with
minimal operation cost.

Therefore, each firm must solve the Wagner and Whitin extended model with
backlogging costs, solved by Zangwill (1969) using dynamic programming
techniques. Modeling cooperation in purchasing and holding costs has already
appeared in literature. Additionally, our cooperation in backlogging is also
natural, but new: all the members of a coalition pay backlogging cost
(compensation for delayed demand) by the cheapest cost among those in the
coalition. The larger the coalition the stronger so that it can ``squeeze''
a bit more their customers. SI-games are an extension of PI-games (Guardiola
et al., 2009) since the latter do not include setup costs. The reader may
note that whenever set up costs are zero in all periods SI-games become
PI-games. Moreover, SI-games also generalize ELS-games in that all
considered costs can be different for the different players in each period
and, in addition, backorders are permitted.

The contribution of this paper is to prove that the above mentioned mode of
cooperation is always stable in that there exist fair allocations of the
overall cost of the system among the members of a coalition so that no
subgroup of agents is better off by leaving the consortium (every SI-game
has a nonempty core). Then we propose a parametric family of cost
allocations for SI-games: the extended Owen points. We provide sufficient
conditions for this to be a stable family against coalitional defections of
firms, that is, every extended Owen point is a core-allocation. Finally, we
focus on those periods of the time horizon that are consolidated and analyze
their effect on the stability of cost allocations. Specifically we prove
that for consolidated SI-situatuins, the single extended Owen point belongs
to the core of the game. Our paper contributes as well to the emerging
literature on the analysis of problems in Operations Research by means of
cooperative games. The interested reader is referred to Borm et al. (2001)
for further details on operations research games (including inventory games).

\section{\label{PIP}Model}

We start by describing the basic form of lot sizing problems (see Johnson
and Montgomery (1974) for further details). We focus here on periodic review
inventory problems where a setup cost is incurred when placing an order,
which in turns makes the cost structure non-linear.

A setup-inventory problem (hereafter SI-problem) can be described as
follows. We consider $T$ periods, numbered from $1$ to $T$, where the demand
for a single product occurs in each of them. This demand can be satisfied by
purchasing or by own production, and can be done in three different periods:
(i) the same period, (ii) an earlier period (as inventory), (iii) a later
period (as backlogging). Every time an order is placed in a certain period,
a fixed cost must be paid. Therefore, the model includes purchasing,
inventory holding, backlogging and setup costs. We assume, without loss of
generality, the initial and terminal inventories are set to zero. The
objective is to find an optimal ordering plan, that is a feasible ordering
plan that minimizes the sum of setup, purchasing, inventory holding and
backlogging cost.

For each period $t=1,\ldots ,T$ we let:

\begin{itemize}
\item $d_{t}=$ demand in period $t$ and $d=(d_{1},\ldots ,d_{T}).$

\item $k_{t}=$ setup cost in period $t$ and $k=(k_{1},\ldots ,k_{T}).$

\item $h_{t}=$ unit inventory carrying costs in period $t\ $and $%
h=(h_{1},\ldots ,h_{T}).$

\item $b_{t}=$ unit backlogging carrying costs in period $t\ $and $%
b=(b_{1},\ldots ,b_{T}).$

\item $p_{t}=$ unit purchasing costs in period $t\ $and $p=(p_{1},\ldots
,p_{T}).$

\item $q_{t}=$ order size in period $t.$

\item $I_{t}=$ ending inventory in period $t$.

\item $E_{t}=$ ending backlogged demand in period $t$.
\end{itemize}

We consider that costs and demand can never be negative. Furthermore, the
decision variables $q_{t}$, $I_{t}$ and $E_{t}$ take integer values. The
single-item formulation of the problem $(T,d,k,h,b,p)$ is as follows: 
\begin{eqnarray}
(SI)\quad &\min &\displaystyle\sum_{t=1}^{T}\left[
p_{t}q_{t}+h_{t}I_{t}+b_{t}E_{t}+k_{t}\gamma (q_{t})\right]  \notag \\
&\mbox{s.t.}&I_{0}=I_{T}=E_{0}=E_{T}=0,  \notag \\
&&I_{t}-E_{t}=I_{t-1}-E_{t-1}+q_{t}-d_{t},\quad t=1,\ldots ,T,  \notag \\
&&q_{t},\;I_{t},\;E_{t},\mbox{ non-negative, integer},\;t=1\ldots ,T,  \notag
\end{eqnarray}

where,%
\begin{equation*}
\gamma (q)=\left\{ 
\begin{array}{cc}
1 & \text{if }q>0, \\ 
0 & \text{if }q=0.%
\end{array}%
\right.
\end{equation*}

We define a feasible ordering plan for a SI-problem as $\sigma \in \mathbb{R}%
^{T}$ where $\sigma _{t}\in T\cup \{0\}$ denotes the period where demand of
period $t$ is ordered. We assume the convention that $\sigma _{t}=0$ if and
only if $d_{t}=0$. It means that there is no order placed to satisfy a null
demand at period $t$. Moreover, we define $P(\sigma )\in \mathbb{R}^{T}$ as
the cost vector associated to ordering plan $\sigma $ (henceforth: cost-plan
vector), where 
\begin{equation*}
P_{t}(\sigma )=\left\{ 
\begin{array}{ccc}
0 & \text{if} & \sigma _{t}=0, \\ 
p_{t}, & \text{if} & \sigma _{t}=t, \\ 
p_{\sigma _{t}}+\sum_{r=\sigma _{t}}^{t-1}h_{r}, & \text{if} & 1\leq \sigma
_{t}<t, \\ 
p_{\sigma _{t}}+\sum_{r=t+1}^{\sigma _{t}}b_{r}, & \text{if} & t<\sigma
_{t}\leq T.%
\end{array}%
\right.
\end{equation*}

If $\sigma ^{\ast }$ is an optimal ordering plan for a SI-problem, then the
optimal cost is given by 
\begin{equation*}
v(SI)=P(\sigma ^{\ast })^{\prime }d+\delta (\sigma ^{\ast })^{\prime
}k=\sum_{t=1}^{T}\left( P_{t}(\sigma ^{\ast })d_{t}+\delta _{t}(\sigma
^{\ast })k_{t}\right) ,
\end{equation*}

\noindent where, $\delta (\sigma ^{\ast })=\left( \delta _{t}(\sigma ^{\ast
})\right) _{t\in T}$ and 
\begin{equation*}
\delta _{t}(\sigma ^{\ast })=\left\{ 
\begin{array}{cc}
1 & \text{if }\exists r\in T/\sigma _{r}^{\ast }=t\text{ }, \\ 
0 & \text{otherwise.}%
\end{array}%
\right.
\end{equation*}

Notice that if all setup costs are zero, the problem we are dealing with is
a PI-problem (see Guardiola et al. 2009).

\section{\label{PIG}Cooperation in lot-sizing with heterogeneous costs}

Next we address a variant of this model where several firms, facing each one
a SI-problem, coordinate their actions to reduce costs. This coordination is
driven by sharing ordering channels, backlogged and inventory carrying
technologies. This means that cooperating firms make a joint order and pay
backlogged and inventory carrying demand at the cheapest costs among the
members of the coalition at each period. Formally, a setup-inventory
situation (henceforth, SI-situation) is a tuple $(N,D,K,H,B,P)$ where:

\begin{itemize}
\item $N=\{1,\ldots ,n\}$ is the set of players.

\item $D=[d^{1},\ldots ,d^{n}]^{\prime }$ is an integer demand matrix, where
each row corresponds to the demand of a player, that is, $%
d^{i}=[d_{1}^{i},\ldots ,d_{T}^{i}]^{\prime }.$

\item $K=[k^{1},\ldots ,k^{n}]^{\prime }$ is a setup cost matrix, where $%
k^{i}=[k_{1}^{i},\ldots ,k_{T}^{i}]^{\prime }.$

\item $H=[h^{1},\ldots ,h^{n}]^{\prime }$ is an inventory carrying costs
matrix, where $h^{i}=[h_{1}^{i},\ldots ,h_{T}^{i}]^{\prime }.$

\item $B=[b^{1},\ldots ,b^{n}]^{\prime }$ is a backlogging carrying costs
matrix, where $b^{i}=[b_{1}^{i},\ldots ,b_{T}^{i}]^{\prime }.$

\item $P=[p^{1},\ldots ,p^{n}]^{\prime }$ is a purchasing costs matrix,
where $p^{i}=[p_{1}^{i},\ldots ,p_{T}^{i}]^{\prime }.$\medskip
\end{itemize}

In order to simplify the notation we define $Z$ as a matrix in which all
costs are included, that is, $Z:=(K,H,B,P).$ A cost TU-game is a pair $(N,c)$%
, where $N$ is the finite player set and $c:\mathcal{P}(N)\rightarrow 
\mathbb{R}$ the characteristic function satisfying $c(\varnothing )=0.$ The
subgame related to coalition $S,c_{S},$ is the restriction of the mapping $c$
to the subcoalitions of $S.$ A cost allocation will be $x\in \mathbb{R}^{n}$
and, for every coalition $S\subseteq N$ we shall write $x(S):=\sum_{i\in
S}x_{i}$ the cost-sharing to coalition $S$ (where $x(\varnothing )=0).$

For each SI-situation $(N,D,Z)$ we associate a cost TU-game $(N,c)$ where,
for any nonempty coalition $S\subseteq N, c(S)$ is the optimal value of the
optimization problem $SI(S)$, defined as: 
\begin{eqnarray*}
(SI(S))\quad &\min
&\sum_{t=1}^{T}(p_{t}^{S}q_{t}+h_{t}^{S}I_{t}+b_{t}^{S}E_{t}+k_{t}^{S}\gamma
(q_{t})) \\
&\mbox{s.t.}&I_{0}=I_{T}=E_{0}=E_{T}=0 \\
&&I_{t}-E_{t}=I_{t-1}-E_{t-1}+q_{t}-d_{t}^{S},\quad t=1,\ldots ,T, \\
&&q_{t}\geq 0,\;I_{t}\geq 0,\;E_{t}\geq 0,\mbox{ and integers, }t=1,\ldots
,T;
\end{eqnarray*}%
with%
\begin{equation*}
p_{t}^{S}=\min_{i\in S}\{p_{t}^{i}\},\;h_{t}^{S}=\min_{i\in
S}\{h_{t}^{i}\},\;b_{t}^{S}=\min_{i\in
S}\{b_{t}^{i}\},\;k_{t}^{S}=\min_{i\in
S}\{k_{t}^{i}\},\;d_{t}^{S}=\sum_{i\in S}d_{t}^{i}.
\end{equation*}%
Notice that for all nonempty $S\subseteq N$ the characteristic function $c$
can be rewritten as follows:%
\begin{equation*}
c(S)=P^{S}(\sigma ^{S})^{\prime }d^{S}+\delta (\sigma ^{S})^{\prime
}k^{S}=\sum_{t=1}^{T}\left( P_{t}^{S}(\sigma ^{S})d_{t}^{S}+\delta
_{t}(\sigma ^{S})k_{t}^{S}\right) ,
\end{equation*}%
where $\sigma ^{S},P_{t}^{S}(\sigma ^{S})\in \mathbb{R}^{T}$ are the optimal
ordering plan and the cost-plan vector associated to $SI(S)$, respectively.
Every cost TU-game defined in this way is what we call a setup-inventory
game (SI-game).

The reader may notice that every PI-game (as introduced by Guardiola et al.,
2009) is a SI-game with $K=0.$ Hence the class of PI-games is a particular
subclass of the SI-games. \medskip

First, we wonder whether the above model of cooperation is stable, i.e.
whether there is a fair division of the total cost among the players such
that no group of them has incentives to leave. As we had already announced
the concept of core provides a direct answer to that question. The core of
the game $(N,c)$ consists of those cost allocations which divide the cost of
the grand coalition in such a way that any other coalition pays at most its
cost by the characteristic function. Formally, \medskip

$Core(N,c)=\left\{ x\in \mathbb{R}^{n}\left/ x(N)=c(N)\text{ and }x(S)\leq
c(S)\text{ for all }S\subset N\right. \right\}$. \medskip

In the following, fair allocations of the total cost will be called
core-allocations.

Bondareva (1963) and Shapley (1967) independently provide a general
characterization of games with a non-empty core by means of balanceness. A
collection of coalitions in $N$, $\mathcal{B}\subseteq \mathcal{P}(N)$ is a
balanced collection if there exist nonnegative weights $\left\{ \lambda
_{S}\right\} _{S\in \mathcal{B}}$ such that $\sum_{S\in \mathcal{B}:i\in
S}\lambda _{S}=1$ for all $i\in N$. Those weight $\left\{ \lambda
_{S}\right\} _{S\in \mathcal{B}}$ are called balancing coefficient. A cost
game $(N,c)$ is balanced if for every balanced collection $\mathcal{B}$ with
balancing coefficients $\{\lambda _{S}\}_{S\in \mathcal{B}}$ it holds that 
\begin{equation*}
\sum_{S\in \mathcal{B}}\lambda _{S}c(S)\geq c(N).
\end{equation*}%
Then, Bondareva and Shapley prove that $(N,c)$ has a nonempty core if and
only if it is balanced. In addition, it is totally balanced game if the core
of every subgame is nonempty. Totally balanced games were introduced by
Shapley and Shubik in the study of market games (see Shapley and Shubik,
1969).

It is important to remark that even though the problems that define SI-games
have totally unimodular constraint matrices, dual solutions do not induce
core solution throughout the Owen construction. The reason is because the
objective function is not linear. Therefore, the question whether the core
of the corresponding game is empty or not is a challenging query and its
study makes sense. \medskip

The main result of this section states that the cooperation in lot sizing
problems with heterogeneous costs is always stable. In other words, SI-games
are balanced. In what follows we include a technical lemma that helps in
proving the following theorem. Actually, it provides a procedure for
constructing new ordering plans out of existing ones. The rationale behind
this construction is similar to the one used in Van den Heuvel et al. (2007).

\begin{lemma}
\label{lem:prev} Let $\mathcal{B}$ be a balanced collection of coalitions
with balancing coefficients $\{\lambda _{S}\}_{S\in \mathcal{B}}$. Assume
that for each coalition $S\in \mathcal{B}$, $\pi ^{S}$ is the optimal order
plan for the problem $(T,d^{S},k^{N},h^{N},b^{N},p^{N})$. Let $r$ be the
smallest positive integer such that $l_{S}:=r\lambda _{S}^{\ast }\in \mathbb{%
Z}^{+}$ for all $S\in \mathcal{B}$ where $\{\lambda _{S}^{\ast }\}_{S\in 
\mathcal{B}}$ are rational numbers greater than or equal to $\{\lambda
_{S}\}_{S\in \mathcal{B}}$. Then, there exist $\{\psi ^{j}\}_{j=1\ldots ,r}$
feasible order plans for the problem $(T,d^{N},k^{N},h^{N},b^{N},p^{N})$
satisfying (i), (ii) and (iii):

\begin{itemize}
\item[(i)] For each $t$, there exists $S(j)\in \mathcal{B}$ such that $\psi
_{t}^{j}=\left\{ 
\begin{array}{ll}
\pi _{t}^{S(j)} & \text{if}\;d_{t}^{N}>0, \\ 
0 & \text{otherwise.}%
\end{array}%
\right. $

\item[(ii)] $S(j)$ is used at most $r\lambda _{S(j)}^{\ast }$ times $\forall
j=1,\ldots ,r$.

\item[(iii)] For any $j=1,\ldots ,r,\;P_{t}^{N}(\pi ^{S(j)})\leq
P_{t}^{N}(\pi ^{S})$ for all $S\neq S(i),\;\forall i=1,\ldots ,r$.
\end{itemize}
\end{lemma}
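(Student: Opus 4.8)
The plan is to build the $r$ order plans $\psi^1,\ldots,\psi^r$ one period at a time, at each period distributing the $r$ ``slots'' $j=1,\ldots,r$ among the coalitions of $\mathcal{B}$ according to their budgets $l_S=r\lambda_S^{\ast}$ and always favouring the coalitions that are cheapest at that period. Fix a period $t$ with $d_t^N>0$. Call a coalition $S\in\mathcal{B}$ \emph{eligible at} $t$ if $d_t^S>0$; only for these does $\pi_t^S\neq 0$ give a legitimate ordering period that can be copied, since for an ineligible $S$ the convention forces $\pi_t^S=0$, which cannot serve the positive demand $d_t^N$. The construction will copy, in slot $j$ at period $t$, the value $\pi_t^{S(j)}$ of an eligible coalition $S(j)$, and set $\psi_t^j=0$ wherever $d_t^N=0$; this immediately yields (i) and guarantees that the support of each $\psi^j$ coincides with that of $d^N$, so that each $\psi^j$ is a feasible ordering plan for $(T,d^N,k^N,h^N,b^N,p^N)$.

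The crucial step is to verify that at every such $t$ the eligible coalitions carry enough budget to fill all $r$ slots, i.e. $\sum_{S\in\mathcal{B}:\,d_t^S>0}l_S\geq r$. Since $d_t^N>0$ there is a player $i$ with $d_t^i>0$, and every coalition $S\ni i$ satisfies $d_t^S\geq d_t^i>0$, hence is eligible. Using $\lambda_S^{\ast}\geq\lambda_S\geq 0$ together with the balancedness identity $\sum_{S\in\mathcal{B}:\,i\in S}\lambda_S=1$, I obtain
\begin{equation*}
\sum_{S\in\mathcal{B}:\,d_t^S>0}\lambda_S^{\ast}\;\geq\;\sum_{S\in\mathcal{B}:\,i\in S}\lambda_S^{\ast}\;\geq\;\sum_{S\in\mathcal{B}:\,i\in S}\lambda_S\;=\;1,
\end{equation*}
and multiplying by $r$ gives $\sum_{S:\,d_t^S>0}l_S=r\sum_{S:\,d_t^S>0}\lambda_S^{\ast}\geq r$, as required.

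With enough eligible budget available, I assign the slots greedily: order the eligible coalitions by increasing per-unit cost $P_t^N(\pi^S)$ and hand out the slots $j=1,\ldots,r$ to the cheapest coalitions first, giving each coalition $S$ at most its budget $l_S$ slots before passing to the next. By construction no coalition is used more than $l_S=r\lambda_S^{\ast}$ times at period $t$, which is (ii), and every coalition that receives a slot is no more expensive (at $t$, measured with the grand-coalition costs) than any eligible coalition that receives none, which is (iii). Carrying out this assignment independently for every period $t$ with $d_t^N>0$, and padding with zeros where $d_t^N=0$, produces the desired family $\{\psi^j\}_{j=1,\ldots,r}$.

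The main obstacle is precisely the eligibility-and-counting step: one must forbid copying an order period from a coalition with no demand at $t$, and at the same time guarantee that the cheap, eligible coalitions \emph{alone} suffice to cover all $r$ slots. This is the only place where the hypotheses on $\mathcal{B}$ are used, via balancedness and the rounding up to rationals $\lambda_S^{\ast}$ with common denominator dividing $r$; once it is settled, feasibility of each $\psi^j$ and the three conditions (i)--(iii) follow directly from the greedy assignment.
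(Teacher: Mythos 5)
Your proposal is correct and takes essentially the same route as the paper's own proof: your ``eligible'' coalitions (those with $d_t^S>0$) are exactly the paper's collection $\mathcal{C}_t$ of replicas meeting $N_t^{\ast}=\{i:d_t^i>0\}$, your budget count $\sum_{S:\,d_t^S>0} l_S\geq r$ is the paper's bound $|\mathcal{C}_t|\geq r$ obtained from the same balancedness identity at a player $i$ with $d_t^i>0$, and your greedy cheapest-first slot assignment is the paper's non-decreasing sorting of $\mathcal{C}_t$ by $P_t^N(\pi^{S_k})$ (with the budgets $l_S$ playing the role of the $l_S$ replicas in the paper's auxiliary set $\mathcal{BA}$). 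No gaps.
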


\begin{proof}
The proof of this lemma is constructive so that at the end we will have a
procedure to actually construct the corresponding order policies $\{\psi
^{j}\}_{j=1\ldots ,r}$.

Let $\mathcal{B}$ be a balanced collection and $\{\lambda _{S}\}_{S\in 
\mathcal{B}}$ their corresponding balancing coefficients. Take $\lambda
_{S}^{\ast }$ as a rational number greater than or equal to $\lambda _{S}$
for all $S\in \mathcal{B}$. There exists $r\in \mathbb{N}$ being the
smallest positive integer such that $\l _{S}=r\lambda _{S}^{\ast }\in 
\mathbb{Z}^{+}$ for all $S\in \mathcal{B}$. Notice that this number $r$
satisfies 
\begin{equation}
m:=\sum_{S\in \mathcal{B}}r\lambda _{S}^{\ast }\geq r,  \label{eq:def_m}
\end{equation}%
since $\sum_{S\in \mathcal{B}}\lambda _{S}^{\ast }\geq \sum_{S\in \mathcal{B}%
}\lambda _{S}\geq 1$.

Consider the following artificial set of coalitions, namely $\mathcal{BA}$%
\label{pag:BA}. For each coalition $S\in \mathcal{B}$, consider $%
l_{S}=r\lambda _{S}^{\ast }$ replicas of $S$ in the set $\mathcal{BA}$.
Therefore, we have in total $m$ coalitions in the new collection (see (\ref%
{eq:def_m})). We assume an arbitrary ordering of these coalitions so that we
can refer, without confusion, to any coalition, say $S_{k}$, by its index in
such a sequence. Moreover, let $\pi ^{S_{k}}$ be the optimal order plan
corresponding to the problem $(T,d^{S_{k}},k^{N},h^{N},b^{N},p^{N})$, for
each $k=1,\ldots ,m$.

Then we proceed to construct the feasible order plans $\{\psi^j\}$. For each
period $t$, $1\leq t\leq T$ set:

\begin{description}
\item[i)] $\psi _{t}^{j}=0$ for all $j=1,\ldots ,r$ if $d_{t}^{N}=0$.

\item[ii)] If $d_{t}^{N}>0$ do the following. Define 
\begin{eqnarray}
N_{t}^{\ast } &=&\{i\in N:d_{t}^{i}>0\},  \notag \\
\mathcal{C}_{t} &=&\{S_{k}:N_{t}^{\ast }\cap S_{k}\neq \emptyset
,\;k=1,\ldots ,m\}.  \label{d*}
\end{eqnarray}%
Notice that $N_{t}^{\ast }\neq \emptyset $ and moreover $\{i\}\subset
N_{t}^{\ast }$ since there exists always at least one agent $i\in N$ such
that $d_{t}^{i}>0$, otherwise $d_{t}^{N}=0$. Therefore, $\displaystyle%
\sum_{S\in \mathcal{B},i\in S}r\lambda _{S}^{\ast }\geq r$; and thus $|%
\mathcal{C}_{t}|\geq r$.

Arrange the coalitions in $\mathcal{C}_{t}$ in non-decreasing sequence , $%
S_{(1)},S_{(2)},\ldots ,S_{(|\mathcal{C}_{t}|)}$, with respect to the values
of $\{P_{t}^{N}(\pi ^{S_{k}})\}_{S^{k}\in \mathcal{C}_{t}}$. It is then
clear that: 
\begin{equation}
P_{t}^{N}(\pi ^{S_{(1)}})\leq P_{t}^{N}(\pi ^{S_{(2)}})\leq \ldots \leq
P_{t}^{N}(\pi ^{S_{(|\mathcal{C}_{t}|)}}).
\end{equation}%
Finally, we set $\psi _{t}^{j}=\pi _{t}^{S_{(j)}}$ for $j=1,\ldots ,r$.
Notice that since $|C_t|\ge r$ this definition is well-stated.
\end{description}

\noindent The above construction satisfies the thesis of the lemma.
\end{proof}

\medskip

The following example illustrates how to implement the aforementioned
procedure.

\begin{example}
Consider the following SI-situation with three players and four periods.
Notice that all players have the same costs but different demands. 
\begin{equation*}
\begin{array}{|c|c|c|c|c||c|c|c|c||c|c|c|c||c|c|c|c||c|c|c|c|}
\hline
& \multicolumn{4}{|c||}{Demand} & \multicolumn{4}{|c||}{Purchasing} & 
\multicolumn{4}{|c||}{Inventory} & \multicolumn{4}{|c||}{Backlogging} & 
\multicolumn{4}{|c|}{Setup} \\ \hline
P1 & $1$ & $1$ & $3$ & $2$ & $1$ & $1$ & $2$ & $2$ & $1$ & $2$ & $2$ & $1$ & 
$1$ & $1$ & $1$ & $1$ & $3$ & $1$ & $1$ & $9$ \\ \hline
P2 & $2$ & $1$ & $8$ & $2$ & $1$ & $1$ & $2$ & $2$ & $1$ & $2$ & $2$ & $1$ & 
$1$ & $1$ & $1$ & $1$ & $3$ & $1$ & $1$ & $9$ \\ \hline
P3 & $2$ & $1$ & $9$ & $2$ & $1$ & $1$ & $2$ & $2$ & $1$ & $2$ & $2$ & $1$ & 
$1$ & $1$ & $1$ & $1$ & $3$ & $1$ & $1$ & $9$ \\ \hline
\end{array}%
\end{equation*}
Let $B:=\left\{ \{1,2\},\{1,3\},\{2,3\},\{1,2,3\}\right\} $\ be a balanced
collection of coalitions with balancing coefficients $\{\lambda _{S}\}_{S\in 
\mathcal{B}}=\left\{ \frac{1}{3},\frac{1}{3},\frac{1}{3},\frac{1}{3}\right\}
.$\ In this case $\{\lambda _{S}^{\ast }\}_{S\in \mathcal{B}}=\{\lambda
_{S}\}_{S\in \mathcal{B}}$\ and $r=3.$

Next table shows the optimal ordering plans and the corresponding cost plan
vectors for the problem $(T,d^{S},k^{N},h^{N},b^{N},p^{N})$ for each
coalition $S\in \mathcal{B}$. 
\begin{equation*}
\begin{array}{|c|c|c|c|c||c|c|c|c||}
\hline
S & \pi _{1}^{S} & \pi _{2}^{S} & \pi _{3}^{S} & \pi _{4}^{S} & 
P_{1}^{N}(\pi ^{S}) & P_{2}^{N}(\pi ^{S}) & P_{3}^{N}(\pi ^{S}) & 
P_{4}^{N}(\pi ^{S}) \\ \hline
\{1,2\} & 1 & 2 & 3 & 3 & 1 & 1 & 2 & 4 \\ \hline
\{1,3\} & 2 & 2 & 3 & 3 & 2 & 1 & 2 & 4 \\ \hline
\{2,3\} & 1 & 2 & 3 & 3 & 1 & 1 & 2 & 4 \\ \hline
\{1,2,3\} & 1 & 2 & 3 & 4 & 1 & 1 & 2 & 2 \\ \hline
\end{array}%
\end{equation*}

The reader may notice that there always exist 3 feasible order plans for the
problem $(T,d^{N},k^{N},h^{N},b^{N},p^{N})$ given by

\begin{equation*}
\begin{array}{|c|c|c|c|c||c|c|c|c||}
\hline
j & \psi _{1}^{j} & \psi _{2}^{j} & \psi _{3}^{j} & \psi _{4}^{j} & 
P_{1}^{N}(\psi ^{j}) & P_{2}^{N}(\psi ^{j}) & P_{3}^{N}(\psi ^{j}) & 
P_{4}^{N}(\psi ^{j}) \\ \hline
1 & 1 & 2 & 3 & 4 & 1 & 1 & 2 & 2 \\ \hline
2 & 1 & 2 & 3 & 3 & 1 & 1 & 2 & 4 \\ \hline
3 & 1 & 2 & 3 & 3 & 1 & 1 & 2 & 4 \\ \hline
\end{array}%
\end{equation*}%
These plans can be built by means of the following recursive procedure: for
all $t=1,2,3,4$

\begin{itemize}
\item $\psi _{t}^{1}=\pi _{t}^{S(1)}$ such that $S(1)=\arg \min \left\{
P_{t}^{N}(\pi ^{S})\right\} ,$

\item $\psi _{t}^{2}=\pi _{t}^{S(2)}$ such that $S(2)=\arg \min \left\{
P_{t}^{N}(\pi ^{S})\left\vert S\neq S(1)\right. \right\} ,$

\item $\psi _{t}^{3}=\pi _{t}^{S(3)}$ such that $S(3)=\arg \min \left\{
P_{t}^{N}(\pi ^{S})\left\vert S\neq S(1),S(2)\right. \right\} ,$
\end{itemize}

In addition, for all $S\neq S(1),S(2),S(3),$ $P_{t}^{N}(\pi ^{S})\geq
P_{t}^{N}(\pi ^{S(j)})=P_{t}^{N}(\psi ^{j})$, for all $t=1,2,3,4$ and all $%
j=1,2,3$.
\end{example}

\begin{theorem}
\label{totalbal} Every SI-game $(N,c)$ associated to a SI-situation $(N,D,Z)$
is balanced.
\end{theorem}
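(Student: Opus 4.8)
The plan is to verify the Bondareva--Shapley condition: for an arbitrary balanced collection $\mathcal{B}$ with balancing coefficients $\{\lambda_S\}_{S\in\mathcal{B}}$ I must show $\sum_{S\in\mathcal{B}}\lambda_S c(S)\geq c(N)$. Since the coefficients may be irrational while Lemma~\ref{lem:prev} needs integer replicas, I first pass to rational $\lambda_S^{\ast}\geq\lambda_S$ as in the lemma, prove the inequality with the $\lambda_S^{\ast}$, and recover the statement for the $\lambda_S$ at the end by letting $\lambda_S^{\ast}\downarrow\lambda_S$ (the bound is continuous in the coefficients, so it passes to the limit). It is convenient to introduce the auxiliary value $\tilde c(S)$, the optimal cost of the problem $(T,d^S,k^N,h^N,b^N,p^N)$, i.e. coalition $S$'s demand served at the grand coalition's cheapest costs $Z^N$. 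Because $Z^N\leq Z^S$ componentwise, every ordering plan is at least as cheap under $Z^N$, so $\tilde c(S)\leq c(S)$, while $\tilde c(N)=c(N)$. Hence it suffices to prove $c(N)\leq\sum_{S}\lambda_S^{\ast}\,\tilde c(S)$.

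Next I invoke Lemma~\ref{lem:prev} to obtain the feasible plans $\{\psi^j\}_{j=1,\dots,r}$ for $(T,d^N,Z^N)$ together with the artificial collection $\mathcal{BA}=\{S_1,\dots,S_m\}$, $m=\sum_S l_S$. Each $\psi^j$ is feasible for the grand-coalition problem, so $c(N)\leq C^N(\psi^j)$, where $C^N$ denotes the $SI(N)$ objective; averaging gives $r\,c(N)\leq\sum_{j=1}^r C^N(\psi^j)$. The theorem then reduces to the single estimate
\[
\sum_{j=1}^r C^N(\psi^j)\ \leq\ \sum_{k=1}^m \tilde c(S_k)\ =\ \sum_{S\in\mathcal{B}} l_S\,\tilde c(S)\ =\ r\sum_{S}\lambda_S^{\ast}\,\tilde c(S),
\]
after which dividing by $r$ and letting $\lambda^{\ast}\downarrow\lambda$ gives $c(N)\leq\sum_S\lambda_S\tilde c(S)\leq\sum_S\lambda_S c(S)$, the required inequality.

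For the purchasing/holding/backlogging part I would argue period by period. Fixing $t$ with $d_t^N>0$, property~(i) gives $\psi_t^j=\pi_t^{S_{(j)}}$ with $S_{(1)},\dots,S_{(r)}$ the $r$ cheapest coalitions of $\mathcal{C}_t$ with respect to $P_t^N(\cdot)$. Writing $d_t^{S_k}=\sum_{i\in S_k\cap N_t^{\ast}}d_t^i$ and interchanging the order of summation, the coalition contribution at $t$ equals $\sum_{i\in N_t^{\ast}}d_t^i\sum_{S_k\in\mathcal{C}_t,\,i\in S_k}P_t^N(\pi^{S_k})$. For each fixed $i\in N_t^{\ast}$ the replicas of $\mathcal{C}_t$ containing $i$ number $\sum_{S\ni i}l_S\geq r$ by balancedness, form a subset of $\mathcal{C}_t$, and carry nonnegative $P_t^N$ values; hence their sum is at least the sum of the $r$ smallest values in $\mathcal{C}_t$, namely $\sum_{j=1}^r P_t^N(\pi^{S_{(j)}})$. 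Multiplying by $d_t^i$ and summing over $i$ recovers exactly $d_t^N\sum_j P_t^N(\psi^j)$, the purchasing part of $\sum_j C^N(\psi^j)$ at period $t$.

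The delicate point, and the one I expect to be the main obstacle, is the setup part: controlling $\sum_{j=1}^r\sum_t\delta_t(\psi^j)k_t^N$ against $\sum_{k=1}^m\sum_t\delta_t(\pi^{S_k})k_t^N$. A naive period-by-period bound $\sum_j\delta_t(\psi^j)\leq\sum_k\delta_t(\pi^{S_k})$ is \emph{false}: because the replicas of one coalition can occupy different positions of the cheapest-first ordering at different periods, a single coalition that orders several periods' demand at the same period $t$ can force more of the $\psi^j$ to set up at $t$ than there are replicas setting up at $t$. Such surplus setups must instead be absorbed by the slack in the purchasing estimate (which is strict whenever $\sum_{S\ni i}l_S>r$ or the $P_t^N$ values are not all tied), so the two parts cannot be separated and the inequality must be established for the full objective $C^N$ at once. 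Here I would use property~(ii) — each coalition is used at most $l_S$ times — to charge the setups of the $\psi^j$ to the setups of \emph{distinct} replicas globally rather than locally, tracking an assignment of the selected slots across all periods simultaneously; reconciling this global charging with the cheapest-first selection of property~(iii) is where the real work lies.
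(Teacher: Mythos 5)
Your proposal follows the paper's own strategy essentially step for step: reduce $c(S)$ to the cost of serving $d^S$ at the grand coalition's costs (the quantity $P^{N}(\pi ^{S})^{\prime }d^{S}+\delta (\pi ^{S})^{\prime }k^{N}$), pass to rational coefficients $\lambda_S^{\ast}\geq\lambda_S$ and to the replicated collection $\mathcal{BA}$ via Lemma~\ref{lem:prev}, compare the replicas' plans with the $r$ plans $\psi^j$ period by period, bound $r\,c(N)$ using the feasibility of each $\psi^j$ for the grand-coalition problem, and recover the irrational case by a limit argument. Your treatment of the variable-cost part (interchanging the order of summation, using $\sum_{S\ni i}l_S\geq r$, and comparing against the $r$ smallest values of $P_t^N$ over $\mathcal{C}_t$) is exactly the purchasing/holding/backlogging portion of the paper's inequality \eqref{eq:chori2}, and it is correct.

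However, as submitted the proposal does not prove the theorem: you stop at the setup-cost comparison, i.e.\ at establishing $\sum_{j=1}^{r}\sum_{t}\delta _{t}(\psi ^{j})k_{t}^{N}\leq \sum_{k=1}^{m}\sum_{t}\delta _{t}(\pi ^{S_{k}})k_{t}^{N}$ (or the combined inequality when this period-wise bound fails), and you only describe what an argument \emph{would} have to do. That missing estimate is precisely the non-variable part of \eqref{eq:chori2}; without it the chain from $\sum_{k}\bigl(P^{N}(\pi ^{S_{k}})^{\prime }d^{S_{k}}+\delta (\pi ^{S_{k}})^{\prime }k^{N}\bigr)$ down to $r\,c(N)$ is broken, so the Bondareva--Shapley condition is not verified. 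Your diagnosis of why the naive period-by-period bound can fail is substantive and, in my view, correct: since the sorting of $\mathcal{C}_t$ by $P_t^N$ is carried out independently at each period, a single replica that serves two periods $t_1$ and $t_2$ from the same ordering period $t$ may sit in position $1$ of the sorted list at $t_1$ and in position $2$ at $t_2$, forcing both $\psi^1$ and $\psi^2$ to place an order at $t$ while only one replica does; whether the resulting surplus $k_t^N$ is always covered by the slack in the variable-cost estimate is exactly the open question. It is worth pointing out that the paper itself asserts \eqref{eq:chori2} directly ``from the definition'' of $\mathcal{C}_t$ without addressing the setup terms, so the step you could not close is the very step the paper leaves unjustified. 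Identifying that difficulty is valuable, but it does not resolve it: the proposal has a genuine gap at the setup-cost comparison and is therefore incomplete.
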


\begin{proof}
Let $\mathcal{B}\subset 2^{N}$ be a balanced collection and $\left\{ \lambda
_{S}\right\} _{S\in \mathcal{B}}$ their corresponding balancing
coefficients. Then, 
\begin{eqnarray*}
\sum_{S\in \mathcal{B}}\lambda _{S}c(S) &=&\sum_{S\in \mathcal{B}}\lambda
_{S}\left( P^{S}(\sigma ^{S})^{\prime }d^{S}+\delta (\sigma ^{S})^{\prime
}k^{S}\right) \\
&\geq &\sum_{S\in \mathcal{B}}\lambda _{S}\left( P^{N}(\sigma ^{S})^{\prime
}d^{S}+\delta (\sigma ^{S})^{\prime }k^{N}\right) \\
&\geq &\sum_{S\in \mathcal{B}}\lambda _{S}\left( P^{N}(\pi ^{S})^{\prime
}d^{S}+\delta (\pi ^{S})^{\prime }k^{N}\right) ,
\end{eqnarray*}

\noindent where $\pi ^{S}$ for each $S\in B$\ is an optimal order plan for
the SI-problem $(T,d^{S},k^{N},h^{N},b^{N},p^{N})$.

Our goal is to prove that 
\begin{equation}
\sum_{S\in \mathcal{B}}\lambda _{S}\left( P^{N}(\pi ^{S})^{\prime
}d^{S}+\delta (\pi ^{S})^{\prime }k^{N}\right) \geq c(N).  \label{Objetivo}
\end{equation}

According to the construction in Lemma \ref{lem:prev}, take $\lambda
_{S}^{\varepsilon ^{S}}$ as a rational number greater than or equal to $%
\lambda _{S}$ such that 
\begin{equation}
\left\{ 
\begin{array}{ll}
\lambda _{S}^{\varepsilon ^{S}}-\lambda _{S}=\epsilon ^{S}=0 & \mbox{if }%
\lambda _{S}\mbox{ is rational} \\ 
\lambda _{S}^{\varepsilon ^{S}}-\lambda _{S}=\epsilon ^{S}>0 & %
\mbox{otherwise}.%
\end{array}%
\right.  \label{rac_con}
\end{equation}%
There exists $r\in N$ being the smallest positive integer for which $%
r\lambda _{S}^{\varepsilon ^{S}}$ is integral for all $S\in B$. Then (\ref%
{Objetivo}) can be rewritten as 
\begin{equation}
\sum_{S\in \mathcal{B}}r\lambda _{S}^{\varepsilon ^{S}}\left( P^{N}(\pi
^{S})^{\prime }d^{S}+\delta (\pi ^{S})^{\prime }k^{N}\right) -\sum_{S\in 
\mathcal{B}}r\varepsilon ^{S}\left( P^{N}(\pi ^{S})^{\prime }d^{S}+\delta
(\pi ^{S})^{\prime }k^{N}\right) \geq rc(N).  \label{Objetivo 2}
\end{equation}

Using the set $BA$ defined in page \pageref{pag:BA}, the first term in the
left-hand side of (\ref{Objetivo 2}) can be rewritten as 
\begin{eqnarray}
\hspace*{-0.4cm}\displaystyle\sum_{k=1}^{m}\left( P^{N}(\pi
^{S_{k}})^{\prime }d^{S_{k}}+\delta (\pi ^{S_{k}})^{\prime }k^{N}\right) &&%
\displaystyle\hspace*{-0.4cm}=\sum_{k=1}^{m}\sum_{t=1}^{T}\left(
P_{t}^{N}(\pi ^{S_{k}})d_{t}^{S_{k}}+\delta _{t}(\pi
^{S_{k}})k_{t}^{N}\right)  \label{Continuacion} \\
&&\displaystyle\hspace*{-3cm}=\sum_{t=1}^{T}\sum_{i=1}^{n}d_{t}^{i}\sum 
_{\substack{ \{k: S_k\backepsilon i\}}}P_{t}^{N}(\pi
^{S_{k}})+\sum_{t=1}^{T}\sum_{k=1}^{m}\delta _{t}(\pi ^{S_{k}})k_{t}^{N}.
\label{chorizo}
\end{eqnarray}

Now, the expression in (\ref{chorizo}) is stated as: 
\begin{equation}
\hspace*{-1cm}\sum_{t=1}^{T}\Big(\sum_{\substack{ i=1  \\ d_{t}^{i}=0}}%
^{n}d_{t}^{i}\sum_{\substack{ \{k: S_k\backepsilon i\}}}P_{t}^{N}(\pi
^{S_{k}})+\sum_{\substack{ i=1  \\ d_{t}^{i}>0}}^{n}d_{t}^{i}\sum_{\substack{
\{k: S_k\backepsilon i\}}}P_{t}^{N}(\pi ^{S_{k}})\Big)+\sum_{t=1}^{T}%
\sum_{k=1}^{m}\delta _{t}(\pi ^{S_{k}})k_{t}^{N}.  \label{eq:chori1}
\end{equation}

From the above formula and using the definition in (\ref{d*}) we get: 
\begin{equation}
\hspace*{-1cm}\sum_{t=1}^{T}\sum_{\substack{ i=1  \\ d_{t}^{i}>0}}%
^{n}d_{t}^{i}\sum_{\substack{ \{k: S_k\backepsilon i\}}}P_{t}^{N}(\pi
^{S_{k}})+\sum_{t=1}^{T}\sum_{k=1}^{m}\delta _{t}(\pi ^{S_{k}})k_{t}^{N}\geq
\sum_{t=1}^{T}\sum_{i=1}^{n}d_{t}^{i}\sum_{j=1}^{r}P_{t}^{N}(\psi
^{j})+\sum_{t=1}^{T}\sum_{j=1}^{r}\delta _{t}(\psi ^{j})k_{t}^{N}
\label{eq:chori2}
\end{equation}

\noindent Hence, since $\sum_{i=1}^{n}d_{t}^{i}=d_{t}^{N}$, the right-hand
side of (\ref{eq:chori2}) equals the following: 
\begin{eqnarray}
&=&\sum_{j=1}^{r}\sum_{t=1}^{T}\left( P_{t}^{N}(\psi ^{j})d_{t}^{N}+\delta
_{t}(\psi ^{j})k_{t}^{N}\right)  \notag \\
&\geq &\sum_{j=1}^{r}\sum_{t=1}^{T}\left( P_{t}^{N}(\sigma
^{N})d_{t}^{N}+\delta _{t}(\sigma ^{N})k_{t}^{N}\right)  \notag \\
&=&\sum_{j=1}^{r}c(N)=rc(N).  \label{in_fin}
\end{eqnarray}%
Thus, finally from (\ref{Continuacion}) and (\ref{in_fin}) we get the
following inequality: 
\begin{equation}
\sum_{S\in \mathcal{B}}r\lambda _{S}^{\varepsilon ^{S}}\left( P^{N}(\pi
^{S})^{\prime }d^{S}+\delta (\pi ^{S})^{\prime }k^{N}\right) \geq
rc(N),\;\forall \lambda _{S}^{\varepsilon ^{S}}\mbox{ satisfying }(\ref%
{rac_con}).  \label{eq_fin}
\end{equation}

Hence, taking limit in (\ref{eq_fin}) when $\lambda _{S}^{\varepsilon
^{S}}\rightarrow \lambda _{S}$ for all $S\in B$ we obtain: 
\begin{equation*}
\sum_{S\in \mathcal{B}}\lambda _{S}\left( P^{N}(\pi ^{S})^{\prime
}d^{S}+\delta (\pi ^{S})^{\prime }k^{N}\right) \geq c(N),
\end{equation*}%
what concludes the proof.
\end{proof}

\medskip

We note in passing that every subgame of a SI-game is a new SI-game. Thus,
Theorem \ref{totalbal} implies that every SI-game is totally balanced.

There exists an alternative proof of the balancedness of this class of
games. Here we outline this proof for the sake of completeness.

From the central part of the proof of Theorem \ref{totalbal} we deduce that
the balanced character is ensured for any balanced collection with rational
balancing coefficients. Notice that balanced coefficients must be optimal
solutions to the following linear problem (for any suitable choice of
coefficients): 
\begin{eqnarray*}
\max &&\sum_{S\subset N}\lambda _{S}c(S)\mbox{ \hspace*{2cm}} \\
\mbox{s.t.} &&\sum_{S:S\backepsilon i}\lambda _{S}=1\quad i=1,\ldots ,n \\
&&\lambda _{S}\geq 0\quad \forall S\subset N.
\end{eqnarray*}%
Therefore, since the feasible region of the above problem has all its
extreme points being rational numbers we deduce that balancedness holds for
those choices. In addition, any non rational family of balanced coefficients
must be a convex combination of extreme points in this polyhedron. Hence, we
can apply the following construction.

Let $\mathcal{B}$ be a balanced collection with non-rational balancing
coefficients $\{\lambda^{B}_S\}_{S\in \mathcal{B}}$. There exist $%
B^1,\ldots,B^k$ balanced collections with rational balancing coefficients $%
\Big\{\{\lambda^{B^1}_S\}_{S\in {B^1}},\ldots, \{\lambda^{B^k}_S\}_{S\in {B^k%
}}\Big\}$ and $\alpha=(\alpha^1,\ldots,\alpha^k)\ge 0$, $\sum_{i=1}^k
\alpha^i=1$ such that $\mathcal{B}=\bigcup_{i=1}^k B^i$ and $%
\lambda_S^B=\sum_{i=1}^k \alpha^i \lambda_S^{B^i}$. (We assume that $%
\lambda_S^{B^i}=0$ whenever $S\not \in B^i$.) Finally, 
\begin{equation*}
\sum_{S\in \mathcal{B}} \lambda^B_S c(S)=\sum_{S\in \mathcal{B}}
\sum_{i=1}^k \alpha^i \lambda_S^{B^i} c(S)= \sum_{i=1}^k \alpha^i \Big(%
\sum_{S\in B^i} \lambda^{B^i}_S c(S)\Big) \ge c(N).
\end{equation*}

In fact, from the above argument we deduce something more general: proving
balancedness for collections with rational balancing coefficients suffices.

\section{Extended Owen points}

We have just proven the stability of the grand coalition, in the sense of
the core. We know that there always exists a core-allocation for SI-games
but we do not know how to construct it. We propose to find suitable
cost-allocations for SI-games which are easy to calculate and satisfy good
properties.

The Owen point, introduced in Guardiola et al. (2009), is a core-allocation
for PI-games which represents the cost that each player has to pay when
producing at the minimum operational cost (see also Guardiola et al., 2008).
If we consider a SI-situation $(N,D,Z)$ with $K=0$, that is a PI-situation,
the Owen point, $o=(o_{i})_{i\in N}$, is given by 
\begin{equation*}
o_{i}=\sum_{t=1}^{T}P_{t}^{N}(\sigma ^{N})d_{t}^{i},\mbox{ for all
}i\in N.
\end{equation*}

In this section we introduce a parametric family of cost allocations with
the flavor of the Owen point but appropriate to SI-games. We call it the
family of extended Owen points. The interested reader is referred to Perea
et al. (2009, 2012) for alternative extensions of the concept of Owen point.
Before defining this new family of cost allocations, we need to introduce
some previous concepts. \medskip

Let $(N,D,Z)$ be a SI-situation and $(N,c)$ the associated SI-game. We
define the reduced SI-situation associated to $(N,D,Z)$ as a SI-situation $%
(N,D,\widetilde{Z})$ with $\widetilde{Z}=(\widetilde{K},\widetilde{H},%
\widetilde{B},\widetilde{P})$ where 
\begin{equation*}
\widetilde{K}=[k^{N},\ldots ,k^{N}]^{\prime },\widetilde{H}=[h^{N},\ldots
,h^{N}]^{\prime },\widetilde{B}=[b^{N},\ldots ,b^{N}]^{\prime },\widetilde{P}%
=[p^{N},\ldots ,p^{N}]^{\prime }.
\end{equation*}

Note that reduced SI-situations are the simplest SI-situations in that all
their costs are the same for all players in all periods. \medskip

We denote by $(N,\widetilde{c})$ the cost game associated to the reduced
SI-situation $(N,D,\widetilde{Z})$. Notice that $\widetilde{c}(S)\leq c(S)$
for all $S\subset N$ and $\widetilde{c}(N)=c(N).$ Hence $Core(N,\widetilde{c}%
)\subseteq Core(N,c).$ Clearly, each ELS-situation corresponds with a
reduced SI-situation for an appropriate choice of parameters since the costs
involved in each period are the same for all the players (see Van Den Heuvel
et al., 2007). Hence, ELS-situations are particular cases of SI-situations.
\medskip

Next we define the following sets:

\begin{itemize}
\item Set of ordering periods: $T^{S}:=\left\{ t\in T\left\vert \delta
_{t}(\sigma ^{S})=1\right. \right\} $ for all $S \subseteq N $.

It is easy to check that,

\begin{equation*}
\sum_{t=1}^{T}\delta _{t}(\sigma ^{S})k_{t}^{S}=\sum_{t\in T^{S}}k_{t}^{S}.
\end{equation*}

\item Set of consolidated periods: $\Upsilon :=\{t\in T|\exists i\in N%
\mbox{
such that }\delta _{t}(\sigma ^{S})=1$ $\mbox{ for all }S\subseteq N\text{
with }i\in S\}.$ A period is consolidated if there exists at least one
player such that he forces placing an order at this period to any coalition
that he belongs to.
\end{itemize}

We can distinguish two classes of costs for every coalition $S\subseteq N.$\
Variable costs $P^{S}(\sigma ^{S})^{\prime }d^{S},$\ which depends on
demands, and non-consolidated fixed costs $\sum_{t\in T^{S}\setminus
\Upsilon }k_{t}^{S}$. Next we define for each $S\subseteq N$%
\begin{eqnarray*}
N(S) &:&=P^{N}(\sigma ^{N})^{\prime }d^{S}-P^{S}(\sigma ^{S})d^{S}, \\
M(S) &:&=\sum_{t\in T^{N}\setminus \Upsilon }k_{t}^{N}-\sum_{t\in
T^{S}\setminus \Upsilon }k_{t}^{S}.
\end{eqnarray*}

Notice that $N(S)$ and $M(S)$ represent the difference between the ordering
plans $\sigma ^{S}$ and $\sigma ^{N}$ related to variable and
non-consolidated fixed costs.

We are ready now to define the family of extender Owen points.

\begin{equation}
\left\{ \omega (\alpha )\in \mathbb{R}^{N}:\alpha \in \mathbb{R}_{+}^{N}\ 
\text{such that }\alpha (N)>0\right\}  \label{eq:alfa}
\end{equation}%
where 
\begin{equation*}
\omega _{i}(\alpha ):=\sum_{t=1}^{T}P_{t}^{N}(\sigma
^{N})d_{t}^{i}+\sum_{t\in \Upsilon /i\in J_{t}}\frac{k_{t}^{N}}{\left\vert
J_{t}\right\vert }+\frac{\alpha _{i}}{\alpha (N)}\sum_{t\in T^{N}\setminus
\Upsilon }k_{t}^{N}
\end{equation*}%
for all $i\in N$ and $J_{t}:=\{i\in N$ such that $\delta _{t}(\sigma
^{i})=1\}.$

Notice that the above family of cost allocation is a parametric family
depending on $\alpha \in \mathbb{R}_{+}^{N}$ such that $\alpha (N)>0.$

Next proposition shows that, if the optimal ordering plan for the grand
coalition reduces variable and non-consolidated fixed costs with respect to
any coalition $S\subseteq N$, then the family of extended Owen points is a
core-allocation family.

\begin{proposition}
\label{alphas}Let $(N,D,Z)$ a SI-situation and $(N,c)$ the corresponding
SI-game. If $N(S),M(S)\leq 0$ for all $S\subseteq N,$ then for each $\alpha
\in \mathbb{R}_{+}^{N}\ $ such that $\alpha (N)>0,$ the allocation $\omega
(\alpha )=(\omega _{1}(\alpha ),\ldots ,\omega _{n}(\alpha ))$ defined in %
\eqref{eq:alfa} is a core-allocation.
\end{proposition}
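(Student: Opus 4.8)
The plan is to verify the two defining properties of a core-allocation for $\omega(\alpha)$: efficiency, $\omega(\alpha)(N)=c(N)$, and stability, $\omega(\alpha)(S)\le c(S)$ for every $S\subset N$. Throughout I would keep $\alpha\in\mathbb{R}_+^N$ fixed with $\alpha(N)>0$, so that $\alpha(S)/\alpha(N)\in[0,1]$ for all $S$.

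First I would establish efficiency by summing $\omega_i(\alpha)$ over $i\in N$. The variable term collapses to $\sum_{t=1}^T P_t^N(\sigma^N)d_t^N$ because $\sum_{i\in N}d_t^i=d_t^N$; the consolidated-setup term collapses to $\sum_{t\in\Upsilon}k_t^N$, since for each $t\in\Upsilon$ the $|J_t|$ equal shares $k_t^N/|J_t|$ add up to $k_t^N$ (here $J_t\neq\emptyset$ because the forcing player of $t$ lies in $J_t$, taking its own singleton coalition); and the last term collapses to $\sum_{t\in T^N\setminus\Upsilon}k_t^N$ because $\alpha(N)/\alpha(N)=1$. Using $\Upsilon\subseteq T^N$ (the forcing player forces an order at $t$ in $N$ itself, so $\delta_t(\sigma^N)=1$), these three pieces add up to $\sum_{t=1}^T P_t^N(\sigma^N)d_t^N+\sum_{t\in T^N}k_t^N=c(N)$.

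For stability I would sum over $i\in S$ and obtain the three blocks $P^N(\sigma^N)^{\prime}d^S$, $\sum_{t\in\Upsilon}\frac{|S\cap J_t|}{|J_t|}k_t^N$, and $\frac{\alpha(S)}{\alpha(N)}\sum_{t\in T^N\setminus\Upsilon}k_t^N$. The first block is bounded above by $P^S(\sigma^S)^{\prime}d^S$ using $N(S)\le 0$. The third is bounded above by $\sum_{t\in T^S\setminus\Upsilon}k_t^S$ using $\alpha(S)/\alpha(N)\le 1$ together with $M(S)\le 0$. Writing $c(S)=P^S(\sigma^S)^{\prime}d^S+\sum_{t\in T^S\cap\Upsilon}k_t^S+\sum_{t\in T^S\setminus\Upsilon}k_t^S$ and cancelling the two blocks already accounted for, the whole proof reduces to the single inequality $\sum_{t\in\Upsilon}\frac{|S\cap J_t|}{|J_t|}k_t^N\le\sum_{t\in T^S\cap\Upsilon}k_t^S$.

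This last inequality is the heart of the matter, and I expect it to be the main obstacle. Pointwise it is harmless: for any $t$ one has $\frac{|S\cap J_t|}{|J_t|}\le 1$ and $k_t^N=\min_{i\in N}k_t^i\le\min_{i\in S}k_t^i=k_t^S$, so each contributing term obeys $\frac{|S\cap J_t|}{|J_t|}k_t^N\le k_t^S$. The difficulty is purely combinatorial: I must show that every consolidated period that actually contributes on the left, namely every $t\in\Upsilon$ with $S\cap J_t\neq\emptyset$, is an ordering period of $S$, so that the corresponding $k_t^S$ is genuinely available on the right. In other words I need the inclusion $\{t\in\Upsilon : S\cap J_t\neq\emptyset\}\subseteq T^S\cap\Upsilon$. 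The only plausible route is through the definition of consolidated period: one must argue that if $t\in\Upsilon$ and some $i\in S$ has $\delta_t(\sigma^{\{i\}})=1$, then $t$ persists as an ordering period in every coalition containing $i$, and in particular $t\in T^S$. Establishing this persistence — a monotonicity of ordering periods at consolidated periods under coalition enlargement — is the delicate step where the consolidation hypothesis is crucially used; once it is in hand, summing the pointwise bounds over the contributing periods closes the inequality and yields $\omega(\alpha)(S)\le c(S)$.
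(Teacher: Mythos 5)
Your decomposition coincides with the paper's proof step for step: efficiency by direct summation (using $\Upsilon\subseteq T^{N}$ and the fact that the $|J_{t}|$ equal shares of $k_{t}^{N}$ recombine), and stability by splitting $\omega(\alpha)(S)$ into a variable block absorbed by $N(S)\le 0$, a non-consolidated setup block absorbed by $M(S)\le 0$ together with $\alpha(S)/\alpha(N)\in[0,1]$, and a consolidated setup block. You have also correctly located the one inequality on which the whole argument hinges, namely $\sum_{t\in\Upsilon}\frac{|S\cap J_{t}|}{|J_{t}|}\,k_{t}^{N}\le\sum_{t\in T^{S}\cap\Upsilon}k_{t}^{S}$; in the paper this is exactly the first inequality of the displayed chain, where $\sum_{i\in S}\sum_{t\in\Upsilon:\,i\in J_{t}}k_{t}^{N}/|J_{t}|$ is replaced by $\sum_{t\in\Upsilon,\ \delta_{t}(\sigma^{S})=1}|J_{t}|\cdot k_{t}^{S}/|J_{t}|$, asserted without further comment.

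As a proof, however, your proposal is incomplete at precisely that point, and the gap is genuine. You reduce everything to the inclusion $\{t\in\Upsilon:\ S\cap J_{t}\neq\emptyset\}\subseteq T^{S}$ and then say that ``once it is in hand'' the argument closes --- but you never establish it, and it does not follow from the definitions as stated. Membership $t\in\Upsilon$ only guarantees the existence of \emph{some} forcing player $i^{\ast}$, who need not belong to $S$; membership $i\in J_{t}$ only says that the \emph{singleton} optimal plan of $i$ orders at $t$. Neither fact forces $\delta_{t}(\sigma^{S})=1$ for a coalition $S$ containing $i$ but not $i^{\ast}$: with pooled demand and coalition-minimal costs, $S$ may well find it optimal to drop the order at $t$ and serve $d_{t}^{S}$ from another period, and the hypotheses $N(S),M(S)\le 0$ constrain only variable and \emph{non-consolidated} setup costs, so they say nothing about this. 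The ``persistence'' you invoke is therefore an additional property that must be proved from extra structure or assumed outright (it is essentially what the consolidation hypothesis of the next section is designed to supply). To be fair, the paper's own proof makes the same silent leap, so you have in effect pinpointed the one real subtlety of the result; but to complete the proof you must either derive the inclusion $\{t\in\Upsilon:\ S\cap J_{t}\neq\emptyset\}\subseteq T^{S}$ or make explicit that the proposition needs it as a further hypothesis.
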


\begin{proof}
Let $(N,D,Z)$ be a SI-situation and $(N,c)$\ the corresponding SI-game.
Then, 
\begin{eqnarray*}
\omega (S) &=&\sum_{t=1}^{T}P_{t}^{N}(\sigma ^{N})d_{t}^{S}+\sum_{i\in
S}\sum_{t\in \Upsilon /i\in J_{t}}\frac{k_{t}^{N}}{\left\vert
J_{t}\right\vert }+\frac{\alpha (S)}{\alpha (N)}\sum_{t\in T^{N}\setminus
\Upsilon }k_{t}^{N} \\
&\leq &\sum_{t=1}^{T}P_{t}^{S}(\sigma ^{S})d_{t}^{S}+\sum_{\substack{ t\in
\Upsilon  \\ \delta _{t}(\sigma ^{S})=1}}\left\vert J_{t}\right\vert \frac{%
k_{t}^{S}}{\left\vert J_{t}\right\vert }+\frac{\alpha (S)}{\alpha (N)}%
\sum_{t\in T^{S}\setminus \Upsilon }k_{t}^{S}+\left( N(S)+\frac{\alpha (S)}{%
\alpha (N)}M(S)\right) \\
&\leq &\sum_{t=1}^{T}P_{t}^{S}(\sigma ^{S})d_{t}^{S}+\sum_{t=1}^{T}\delta
_{t}(\sigma ^{S})k_{t}^{S}=c(S).
\end{eqnarray*}

It is easy to check that $\omega $ is efficient. Hence, $\omega \in
Core(N,c).$
\end{proof}

\medskip

The above result is illustrated in the next example.

\begin{example}
Consider the following SI-situation with three periods and three players: 
\begin{equation*}
\begin{array}{|c|c|c|c||c|c|c||c|c|c||c|c|c||c|c|c|}
\hline
& \multicolumn{3}{|c||}{Demand} & \multicolumn{3}{|c||}{Purchasing} & 
\multicolumn{3}{|c||}{Inventory} & \multicolumn{3}{|c||}{Backlogging} & 
\multicolumn{3}{|c|}{Setup} \\ \hline
P1 & $6$ & $5$ & $2$ & $3$ & $1$ & $1$ & $2$ & $3$ & $3$ & $1$ & $3$ & $1$ & 
$4$ & $3$ & $4$ \\ \hline
P2 & $4$ & $1$ & $1$ & $5$ & $1$ & $4$ & $2$ & $1$ & $3$ & $1$ & $3$ & $1$ & 
$0$ & $2$ & $5$ \\ \hline
P3 & $1$ & $4$ & $1$ & $2$ & $1$ & $3$ & $3$ & $1$ & $1$ & $1$ & $3$ & $1$ & 
$0$ & $0$ & $5$ \\ \hline
\end{array}%
\end{equation*}
The corresponding SI-game is given by:%
\begin{equation*}
\begin{array}{|c|c|c|c||c|c|c||c|c|c||c|c|c||c|c|c||c||}
\hline
S & d_{1}^{_{S}} & d_{2}^{_{S}} & d_{3}^{_{S}} & p_{1}^{_{S}} & p_{2}^{_{S}}
& p_{3}^{_{S}} & h_{1}^{_{S}} & h_{2}^{_{S}} & h_{3}^{_{S}} & b_{1}^{_{S}} & 
b_{2}^{_{S}} & b_{3}^{_{S}} & k_{1}^{_{S}} & k_{2}^{_{S}} & k_{3}^{_{S}} & 
c(S) \\ \hline
\{1\} & 6 & 5 & 2 & 3 & 1 & 1 & 2 & 3 & 3 & 1 & 3 & 1 & 4 & 3 & 4 & 35 \\ 
\hline
\{2\} & 4 & 1 & 1 & 5 & 1 & 4 & 2 & 1 & 3 & 1 & 3 & 1 & 0 & 2 & 5 & 24 \\ 
\hline
\{3\} & 1 & 4 & 1 & 2 & 1 & 3 & 3 & 1 & 1 & 1 & 3 & 1 & 0 & 0 & 5 & 8 \\ 
\hline
\{1,2\} & 10 & 6 & 3 & 3 & 1 & 1 & 2 & 1 & 3 & 1 & 3 & 1 & 0 & 2 & 4 & 47 \\ 
\hline
\{1,3\} & 7 & 9 & 3 & 2 & 1 & 1 & 2 & 1 & 1 & 1 & 3 & 1 & 0 & 0 & 4 & 29 \\ 
\hline
\{2,3\} & 5 & 5 & 2 & 2 & 1 & 3 & 2 & 1 & 1 & 1 & 3 & 1 & 0 & 0 & 5 & 22 \\ 
\hline
\{1,2,3\} & 11 & 10 & 4 & 2 & 1 & 1 & 2 & 1 & 1 & 1 & 3 & 1 & 0 & 0 & 4 & 43
\\ \hline
\end{array}%
\end{equation*}

Next table shows the optimal ordering plans, the corresponding cost-plan
vectors, and the differences between the ordering plans:%
\begin{equation*}
\begin{array}{|c|c|c|c||c|c|c||c||c||}
\hline
S & \sigma _{1}^{S} & \sigma _{2}^{S} & \sigma _{3}^{S} & P_{1}^{S}(\sigma
^{S}) & P_{2}^{S}(\sigma ^{S}) & P_{3}^{S}(\sigma ^{S}) & N(S) & M(S) \\ 
\hline
\{1\} & 1 & 2 & 3 & 3 & 1 & 1 & -6 & -4 \\ \hline
\{2\} & 2 & 2 & 2 & 4 & 1 & 2 & -9 & 0 \\ \hline
\{3\} & 1 & 2 & 2 & 2 & 1 & 2 & -1 & 0 \\ \hline
\{1,2\} & 1 & 2 & 2 & 3 & 1 & 2 & -13 & 0 \\ \hline
\{1,3\} & 1 & 2 & 2 & 2 & 1 & 2 & -3 & 0 \\ \hline
\{2,3\} & 1 & 2 & 2 & 2 & 1 & 2 & -2 & 0 \\ \hline
\{1,2,3\} & 1 & 2 & 2 & 2 & 1 & 2 & 0 & 0 \\ \hline
\end{array}%
\end{equation*}%
Since $N(S),M(S)\leq 0$\ for all $S\subseteq N,$\ we can conclude that

$\left\{ (19,13,7)+\frac{4}{\alpha (N)}(\alpha _{1},\alpha _{2},\alpha
_{3})\left\vert 
\begin{array}{c}
\alpha _{i}\in \mathbb{R}_{+}\text{ }\forall i\in N \\ 
\text{with }\alpha (N)>0%
\end{array}%
\right. \right\} \subseteq Core(N,c).$
\end{example}

We note that $\omega (\alpha )$\ is not a game-theoretical solution since
its definition only applies on SI-situations. A weaker sufficient condition
to ensure the cost allocation $\omega (\alpha )$\ to be in the core is given
by the following corollary.

\begin{corollary}
Let $(N,D,Z)$ a SI-situation and $(N,c)$ the corresponding SI-game. If there
exists $\alpha \in R_{+}^{N}\ $such that $\alpha (N)>0$ and $N(S)+\frac{%
\alpha (S)}{\alpha (N)}M(S)\leq 0$ for all $S\subseteq N,$ then $\omega
(\alpha )$ is a core-allocation.
\end{corollary}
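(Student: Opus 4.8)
The plan is to re-run the proof of Proposition~\ref{alphas} almost verbatim, exploiting the fact that the stronger hypothesis used there, namely $N(S),M(S)\le 0$ for all $S$, was invoked at exactly one point: to guarantee the single inequality $N(S)+\frac{\alpha(S)}{\alpha(N)}M(S)\le 0$. The corollary simply hands us this inequality directly, for one specific $\alpha$, so nothing else in the argument needs to change.

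First I would fix the particular $\alpha\in\mathbb{R}_+^{N}$ supplied by the hypothesis, with $\alpha(N)>0$ and $N(S)+\frac{\alpha(S)}{\alpha(N)}M(S)\le 0$ for every $S\subseteq N$. Efficiency of $\omega(\alpha)$ is established at the end of the proof of Proposition~\ref{alphas} without using any sign condition, so $\omega(\alpha)(N)=c(N)$ is already in hand; it then remains only to check $\omega(\alpha)(S)\le c(S)$ for each proper coalition $S\subset N$.

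Next I would reproduce the first chain of inequalities from that proof. Expanding $\omega(\alpha)(S)=\sum_{i\in S}\omega_i(\alpha)$ and applying the consolidated-period estimate $\sum_{i\in S}\sum_{t\in\Upsilon,\,i\in J_t}\frac{k_t^N}{\left\vert J_t\right\vert}\le\sum_{t\in\Upsilon,\,\delta_t(\sigma^S)=1}k_t^S$, together with the definitions of $N(S)$ and $M(S)$, which give $\sum_t P_t^N(\sigma^N)d_t^S=\sum_t P_t^S(\sigma^S)d_t^S+N(S)$ and $\frac{\alpha(S)}{\alpha(N)}\sum_{t\in T^N\setminus\Upsilon}k_t^N=\frac{\alpha(S)}{\alpha(N)}\sum_{t\in T^S\setminus\Upsilon}k_t^S+\frac{\alpha(S)}{\alpha(N)}M(S)$, one obtains precisely the intermediate bound in which $c(S)$ appears up to the correction term $N(S)+\frac{\alpha(S)}{\alpha(N)}M(S)$. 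Crucially, none of these manipulations uses any sign assumption on $N(S)$ or $M(S)$, so they transfer to the present setting unchanged.

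Finally I would invoke the corollary's hypothesis to discard the correction term, and bound $\frac{\alpha(S)}{\alpha(N)}\sum_{t\in T^S\setminus\Upsilon}k_t^S\le\sum_{t\in T^S\setminus\Upsilon}k_t^S$ using $k_t^S\ge 0$ and $0\le\frac{\alpha(S)}{\alpha(N)}\le 1$; collecting the ordering costs via $\sum_{t\in\Upsilon,\,\delta_t(\sigma^S)=1}k_t^S+\sum_{t\in T^S\setminus\Upsilon}k_t^S=\sum_{t\in T^S}k_t^S=\sum_t\delta_t(\sigma^S)k_t^S$ then yields $\omega(\alpha)(S)\le c(S)$, as required. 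There is no genuine obstacle here: the entire force of the corollary is that the proof of Proposition~\ref{alphas} already isolated the unique place where the signs of $N(S)$ and $M(S)$ entered. The only point worth verifying explicitly is $\frac{\alpha(S)}{\alpha(N)}\le 1$, which holds because $\alpha\in\mathbb{R}_+^{N}$ forces $\alpha(S)\le\alpha(N)$.
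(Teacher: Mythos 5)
Your proposal is correct and matches the paper's intent exactly: the paper leaves this corollary without a separate proof precisely because, as you observe, the hypotheses $N(S),M(S)\le 0$ in Proposition \ref{alphas} are used only to conclude $N(S)+\frac{\alpha(S)}{\alpha(N)}M(S)\le 0$, which the corollary assumes directly. Your verification of the remaining steps (efficiency, the bound $\frac{\alpha(S)}{\alpha(N)}\le 1$, and the regrouping of setup costs over $T^{S}$) is accurate and adds nothing beyond what the proof of the proposition already contains.
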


The example below illustrates the above condition.

\begin{example}
Consider the SI-situation described by the following table.

\begin{equation*}
\begin{array}{|c|c|c|c||c|c|c||c|c|c||c|c|c||c|c|c|}
\hline
& \multicolumn{3}{|c||}{Demand} & \multicolumn{3}{|c||}{Purchasing} & 
\multicolumn{3}{|c||}{Inventory} & \multicolumn{3}{|c||}{Backlogging} & 
\multicolumn{3}{|c|}{Setup} \\ \hline
P1 & $5$ & $5$ & $2$ & $3$ & $1$ & $1$ & $2$ & $3$ & $3$ & $1$ & $2$ & $1$ & 
$3$ & $1$ & $2$ \\ \hline
P2 & $4$ & $1$ & $1$ & $5$ & $1$ & $4$ & $2$ & $1$ & $3$ & $1$ & $2$ & $1$ & 
$3$ & $1$ & $2$ \\ \hline
P3 & $1$ & $4$ & $5$ & $2$ & $1$ & $3$ & $3$ & $2$ & $1$ & $1$ & $2$ & $1$ & 
$2$ & $1$ & $2$ \\ \hline
\end{array}%
\end{equation*}
The corresponding SI-game is shown in the next table: 
\begin{equation*}
\begin{array}{|c|c|c|c||c|c|c||c|c|c||c|c|c||c|c|c||c||}
\hline
& d_{1}^{_{S}} & d_{2}^{_{S}} & d_{3}^{_{S}} & p_{1}^{_{S}} & p_{2}^{_{S}} & 
p_{3}^{_{S}} & h_{1}^{_{S}} & h_{2}^{_{S}} & h_{3}^{_{S}} & b_{1}^{_{S}} & 
b_{2}^{_{S}} & b_{3}^{_{S}} & k_{1}^{_{S}} & k_{2}^{_{S}} & k_{3}^{_{S}} & c
\\ \hline
\{1\} & 5 & 5 & 2 & 3 & 1 & 1 & 2 & 3 & 3 & 1 & 2 & 1 & 3 & 1 & 2 & 25 \\ 
\hline
\{2\} & 4 & 1 & 1 & 5 & 1 & 4 & 2 & 1 & 3 & 1 & 2 & 1 & 3 & 1 & 2 & 16 \\ 
\hline
\{3\} & 1 & 4 & 5 & 2 & 1 & 3 & 3 & 2 & 1 & 1 & 2 & 1 & 2 & 1 & 2 & 23 \\ 
\hline
\{1,2\} & 9 & 6 & 3 & 3 & 1 & 1 & 2 & 1 & 3 & 1 & 2 & 1 & 3 & 1 & 2 & 39 \\ 
\hline
\{1,3\} & 6 & 9 & 7 & 2 & 1 & 1 & 2 & 2 & 1 & 1 & 2 & 1 & 2 & 1 & 2 & 33 \\ 
\hline
\{2,3\} & 5 & 5 & 6 & 2 & 1 & 3 & 2 & 1 & 1 & 1 & 2 & 1 & 2 & 1 & 2 & 30 \\ 
\hline
\{1,2,3\} & 10 & 10 & 8 & 2 & 1 & 1 & 2 & 1 & 1 & 1 & 2 & 1 & 2 & 1 & 2 & 43
\\ \hline
\end{array}%
\end{equation*}

The optimal ordering plans, the corresponding cost-plan vectors, and the
differences between the ordering plans can be found in the last table:%
\begin{equation*}
\begin{array}{|c|c|c|c||c|c|c||c||c||}
\hline
& \sigma _{1}^{S} & \sigma _{2}^{S} & \sigma _{3}^{S} & P_{1}^{S}(\sigma
^{S}) & P_{2}^{S}(\sigma ^{S}) & P_{3}^{S}(\sigma ^{S}) & N(S) & M(S) \\ 
\hline
\{1\} & 2 & 2 & 3 & 3 & 1 & 1 & -5 & 2 \\ \hline
\{2\} & 2 & 2 & 2 & 3 & 1 & 2 & -5 & 2 \\ \hline
\{3\} & 2 & 2 & 2 & 3 & 1 & 3 & -11 & 2 \\ \hline
\{1,2\} & 2 & 2 & 3 & 3 & 1 & 1 & -9 & 2 \\ \hline
\{1,3\} & 1 & 2 & 3 & 2 & 1 & 1 & 0 & 0 \\ \hline
\{2,3\} & 1 & 2 & 2 & 2 & 1 & 2 & -6 & 0 \\ \hline
\{1,2,3\} & 1 & 2 & 3 & 2 & 1 & 1 & 0 & 0 \\ \hline
\end{array}%
\end{equation*}%
It can be easily checked that $N(S)+\frac{\alpha (S)}{\alpha (N)}M(S)\leq 0$%
\ for all $\alpha \in R_{+}^{N}\ $such that $\alpha (N)>0$\ and all $%
S\subseteq N.$ Hence%
\begin{equation*}
\left\{ \left( \frac{58}{3},\frac{31}{3},\frac{34}{3}\right) +\frac{2}{%
\alpha (N)}(\alpha _{1},\alpha _{2},\alpha _{3})\left\vert 
\begin{array}{c}
\alpha _{i}\in \mathbb{R}_{+}\text{ }\forall i\in N \\ 
\text{with }\alpha (N)>0%
\end{array}%
\right. \right\} \subseteq Core(N,c).
\end{equation*}
\end{example}

We finish this section with a simpler sufficient condition to check core
membership.

\begin{corollary}
Let $(N,D,Z)$ be a SI-situation and $(N,c)$ the corresponding SI-game. If
the reduced SI-situation $(N,D,\widetilde{Z})$ satisfies one of the
following conditions:

\begin{itemize}
\item[(i)] $\widetilde{N}(S),\widetilde{M}(S)\leq 0,$

\item[(ii)] there exists $\alpha \in R_{+}^{N}\ $such that $\alpha (N)>0$\
and $\widetilde{N}(S)+\frac{\alpha (S)}{\alpha (N)}\widetilde{M}(S)\leq 0$\
for all $S\subseteq N,$
\end{itemize}

where $\widetilde{N}(S),\widetilde{M}(S)$\ are the corresponding values for
the reduced SI-situation for all $S\subseteq N,$ then $\omega (\alpha )$\ is
a core-allocation.
\end{corollary}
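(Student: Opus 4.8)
The plan is to reduce everything to results already established, by regarding the reduced SI-situation $(N,D,\widetilde{Z})$ as a SI-situation in its own right. First I would note that $(N,D,\widetilde{Z})$ satisfies all the defining requirements of a SI-situation and that its associated cost game is exactly $(N,\widetilde{c})$; hence Proposition \ref{alphas} and the Corollary that precedes the present one apply to it verbatim, with the quantities $\widetilde{N}(S)$ and $\widetilde{M}(S)$ playing the roles of $N(S)$ and $M(S)$. In particular, the extended Owen point of the reduced situation is a well-defined allocation for each admissible $\alpha$.

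With this in place the argument is immediate. Under (i), Proposition \ref{alphas} applied to $(N,D,\widetilde{Z})$ yields that this extended Owen point lies in $Core(N,\widetilde{c})$ for every $\alpha\in\mathbb{R}_{+}^{N}$ with $\alpha(N)>0$; under the weaker hypothesis (ii), the preceding Corollary applied to $(N,D,\widetilde{Z})$ gives the same membership for the particular $\alpha$ at hand. It then remains to invoke the inclusion $Core(N,\widetilde{c})\subseteq Core(N,c)$, recorded earlier and itself a direct consequence of $\widetilde{c}(S)\le c(S)$ for all $S\subset N$ together with $\widetilde{c}(N)=c(N)$: an allocation that is efficient for $\widetilde{c}$ and does not overcharge any $\widetilde{c}(S)$ is automatically efficient for $c$ and does not overcharge any $c(S)$. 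Composing these two facts places $\omega(\alpha)\in Core(N,c)$.

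The step I expect to require the most care is pinning down which extended Owen point is being certified. The grand-coalition data $P^{N}(\sigma^{N})$, $k^{N}$ and $T^{N}$ agree in $(N,D,Z)$ and $(N,D,\widetilde{Z})$, because every coalition of the reduced situation — $N$ included — faces the grand-coalition costs $k^{N},h^{N},b^{N},p^{N}$, so $\sigma^{N}$, the variable part and the total setup bill are unchanged. What may genuinely differ are the consolidated set $\Upsilon$ and the sets $J_{t}$, since these are read off the individual and sub-coalitional optimal plans, which react to the per-player costs. Thus the allocation delivered by the argument is the extended Owen point computed with the reduced situation's $\Upsilon$ and $J_{t}$, and I would state the conclusion for precisely this allocation; the two extended Owen points coincide exactly when the two situations share the same consolidated structure. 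No estimate beyond those in Proposition \ref{alphas} is needed, so the only real obstacle is this combinatorial bookkeeping rather than any new inequality.
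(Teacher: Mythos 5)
Your argument is correct and is exactly the one the paper intends (it states this corollary without proof): apply Proposition \ref{alphas} under (i), or the preceding corollary under (ii), to the reduced SI-situation $(N,D,\widetilde{Z})$ to place the allocation in $Core(N,\widetilde{c})$, and then invoke the inclusion $Core(N,\widetilde{c})\subseteq Core(N,c)$ recorded earlier. Your additional observation --- that the certified allocation is the extended Owen point computed with the reduced situation's $\Upsilon$ and $J_{t}$, which need not coincide with those of $(N,D,Z)$ even though $P^{N}(\sigma^{N})$, $k^{N}$ and $T^{N}$ do --- is a legitimate and worthwhile clarification of an ambiguity the paper leaves implicit in the phrase ``then $\omega(\alpha)$ is a core-allocation.''
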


\section{Consolidated situations and stability}

We focus now on those SI-situations that are consolidated. Then we analyze
their effect on the stability of the extended Owen points.

A consolidated SI-situation is described by means of a property of the
ordering periods: whenever a player places an order in a period every
coalition that contains that player places an order in the same period as
well. This idea of consolidation is a refinement of the original scheme
since it makes coalitions to perform as any of its individual members. The
concept is formalized in the following definition.

\begin{definition}
A SI-situation $(N,D,Z)$ is consolidated if $T^{S}\subseteq \Upsilon $ for
all $S\subseteq N$.
\end{definition}

From the above definition, it is clear that in any consolidated
SI-situation, $M(S)=0$ for any coalition $S\subseteq N$, since $\sum_{t\in
T^{S}\setminus \Upsilon }k_{t}^{S}=0$. Based in this fact, we can provide an
extended Owen point in the core for consolidated SI-games. The following
technical lemma is needed to prove this result.

\begin{lemma}
\label{CSIG} Let $(N,D,Z)$ be a consolidated SI-situation and $(N,c)$ the
corresponding SI-game. Then $P_{t}^{S}(\sigma ^{S})\geq P_{t}^{R}(\sigma
^{R})$ for all $t\in T$ with $d_{t}^{S}\neq 0$ and for all $S\subseteq
R\subseteq N.$
\end{lemma}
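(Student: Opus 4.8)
The plan is to reduce the statement to a comparison of per-unit serving costs over the sets of order periods, and then to exploit the cost monotonicity coming from $S\subseteq R$ together with the consolidation hypothesis.

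First I would record the structural fact that, because the SI-problem is uncapacitated, in any optimal plan each period's demand is routed to the cheapest available open order. Writing $T^S=\{t:\delta_t(\sigma^S)=1\}$ for the set of order periods of $S$, and letting $C^S(s,t)$ denote the per-unit cost of serving period $t$ from an order placed at period $s$ under the coalition costs $p^S,h^S,b^S$ (that is, the value that $P_t^S(\sigma)$ takes when $\sigma_t=s$), I claim
\[
P_t^S(\sigma^S)=\min_{s\in T^S}C^S(s,t)\qquad\text{whenever }d_t^S\neq 0 .
\]
Indeed, once the set of open orders is fixed the setup cost is sunk and, absent capacity constraints, the periods decouple, so the optimal plan serves $t$ from whichever open period attains the minimum; the same identity holds for $R$.

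Second I would use the monotonicity of the coalition costs: since $S\subseteq R$ gives $p^S_r=\min_{i\in S}p^i_r\ge\min_{i\in R}p^i_r=p^R_r$, and likewise for $h$ and $b$, one has $C^R(s,t)\le C^S(s,t)$ for every pair $s,t$. Granting for a moment that $T^S\subseteq T^R$, the conclusion is immediate:
\[
P_t^R(\sigma^R)=\min_{s\in T^R}C^R(s,t)\le\min_{s\in T^S}C^R(s,t)\le\min_{s\in T^S}C^S(s,t)=P_t^S(\sigma^S),
\]
the first inequality because the minimum runs over a larger index set and the second by pointwise cost monotonicity.

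The crux, and the only place consolidation genuinely enters, is therefore to prove $T^S\subseteq T^R$ for all $S\subseteq R$. The clean case is $R=N$: if $u\in T^S$ then consolidation gives $u\in\Upsilon$, so some player $i$ forces an order at $u$ in every coalition containing him; taking that coalition to be $N$ yields $u\in T^N$, so in fact $T^N=\Upsilon$ and $T^S\subseteq T^N$ for every $S$. For a general intermediate $R$ the hard part will be that the player $i$ witnessing $u\in\Upsilon$ need not lie in $R$. I would close this gap by extracting from the optimality of $\sigma^S$ in a consolidated situation the sharper reading that a coalition opens only those periods forced by its own members, i.e. $T^S=\bigcup_{i\in S}\bigl(\bigcap_{S'\ni i}T^{S'}\bigr)$ (the inclusion $\supseteq$ being automatic); once this is available, writing $F_i:=\bigcap_{S'\ni i}T^{S'}$ gives $T^S=\bigcup_{i\in S}F_i\subseteq\bigcup_{i\in R}F_i=T^R$. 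I expect verifying that each order period of a coalition is forced by one of its own members to be the main obstacle, since it is precisely the interaction between the consolidation property and the optimal lot-sizing plan; the remaining steps are routine.
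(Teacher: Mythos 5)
Your reduction to $P_t^S(\sigma^S)=\min_{s\in T^S}C^S(s,t)$ and the pointwise monotonicity $C^R(s,t)\le C^S(s,t)$ are both sound, but the proof has a genuine gap exactly where you flag it: the inclusion $T^S\subseteq T^R$ is never established, and the sub-claim you propose in order to get it, $T^S=\bigcup_{i\in S}\bigl(\bigcap_{S'\ni i}T^{S'}\bigr)$, does not follow from the definition of a consolidated situation. Consolidation only asserts $T^S\subseteq\Upsilon$, i.e.\ every order period of $S$ is forced by \emph{some} player of $N$; nothing in the definition places that player inside $S$, which is precisely the obstruction you yourself identify for intermediate $R$. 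As written, your argument is complete only for $R=N$ (where indeed $\Upsilon=T^N$ and hence $T^S\subseteq T^N$); for general $S\subseteq R$ the key step remains a conjecture, so the proof is incomplete.

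The paper takes a more local route that does not need the global inclusion. It argues by contradiction at a single period: if $P_{t'}^S(\sigma^S)<P_{t'}^R(\sigma^R)$ for some $t'$ with $d_{t'}^S\neq 0$, put $r=\sigma_{t'}^S$; then $r\in T^S\subseteq\Upsilon$, and the consolidation property is invoked to conclude $\delta_r(\sigma^R)=1$, i.e.\ $R$ already has an order open at period $r$. Rerouting period $t'$'s demand in $R$'s plan to $r$ then costs at most $P_{t'}^S(\sigma^S)$ per unit (by the same cost monotonicity you use) and adds no setup cost, producing a feasible plan strictly cheaper than $\sigma^R$ --- a contradiction. So the only fact needed from consolidation is that the \emph{one} period $r$ serving the offending demand is also open for $R$, a much weaker requirement than $T^S\subseteq T^R$. (Strictly speaking, the paper's step $\delta_r(\sigma^R)=1$ still relies on reading the definition of $\Upsilon$ as applying to $R$, the same witness-membership subtlety you raise; but the exchange argument reduces the burden to that single period rather than to the full set identity you would need.) If you want to complete your version, aim to prove $\sigma_{t'}^S\in T^R$ for the specific period where the inequality would fail, and then finish with the rerouting argument rather than the minimum-over-$T^R$ identity.
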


\begin{proof}
Suppose that $\exists t^{\prime }\in T$ with $d_{t^{\prime }}^{S}\neq 0$
such that $P_{t^{\prime }}^{S}(\sigma ^{S})<P_{t^{\prime }}^{R}(\sigma ^{R})$
then $\sigma _{t^{\prime }}^{S}=r$ and $\sigma _{t^{\prime }}^{R}=r^{\prime
} $ with $r\neq r^{\prime }.$ $\delta _{r}(\sigma ^{S})=1$ therefore $\delta
_{r}(\sigma ^{R})=1$ since $(N,D,Z)$ is a consolidated SI-situation. If we
take the next feasible plan, 
\begin{equation*}
\sigma ^{\ast }:=\left\{ 
\begin{array}{cc}
\sigma _{t}^{R} & \text{if }t\neq t^{\prime }, \\ 
&  \\ 
r & \text{if }t=t^{\prime },%
\end{array}%
\right.
\end{equation*}%
then $P_{t^{\prime }}^{R}(\sigma ^{\ast })\leq P_{t^{\prime }}^{S}(\sigma
^{S})<P_{t^{\prime }}^{R}(\sigma ^{R})$ and $\delta (\sigma ^{\ast
})^{\prime }k^{R}\leq \delta (\sigma ^{R})^{\prime }k^{R}.$ Hence, 
\begin{equation*}
c(R)=P^{R}(\sigma ^{R})^{\prime }d^{R}+\delta (\sigma ^{R})^{\prime
}k^{R}>P^{R}(\sigma ^{\ast })^{\prime }d^{R}+\delta (\sigma ^{\ast
})^{\prime }k^{R},
\end{equation*}%
and this is a contradiction because $\sigma ^{R}$ is an optimal ordering
plan of coalition $R\subseteq N$.
\end{proof}

Note that the above lemma exhibits a monotonicity property with respect to
the ordering policies. We mean that, the smaller the coalition, the greater
the cost of satisfying demand in each single period.

\begin{proposition}
Let $(N,D,Z)$ be a consolidated SI-situation and $(N,c)$ the asociated
SI-game. Then, the allocation $\psi \in \mathbb{R}^{N}$ given by 
\begin{equation*}
\psi _{i}:=\sum_{t=1}^{T}P_{t}^{N}(\sigma ^{N})d_{t}^{i}+\sum_{t\in \Upsilon
/i\in J_{t}}\frac{k_{t}^{N}}{\left\vert J_{t}\right\vert },
\end{equation*}%
for all $i\in N$ where $J_{t}:=\{i\in N$ such that $\delta _{t}(\sigma
^{i})=1\},$ is a core-allocation.
\end{proposition}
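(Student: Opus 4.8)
The plan is to identify $\psi$ with a member of the extended Owen family and then invoke Proposition \ref{alphas}. First I would exploit consolidation to simplify the geometry of the ordering periods. Since $T^S\subseteq\Upsilon$ for every $S\subseteq N$, taking $S=N$ gives $T^N\subseteq\Upsilon$; the reverse inclusion $\Upsilon\subseteq T^N$ always holds (if $t\in\Upsilon$ then some player forces an order at $t$ in every coalition containing it, in particular in $N$, so $\delta_t(\sigma^N)=1$). Hence $\Upsilon=T^N$ and, for every coalition, $T^S\setminus\Upsilon=\varnothing$. Two consequences follow at once: $M(S)=0$ for all $S$, and the third summand $\frac{\alpha_i}{\alpha(N)}\sum_{t\in T^N\setminus\Upsilon}k_t^N$ of $\omega_i(\alpha)$ is identically zero, so that $\omega_i(\alpha)=\psi_i$ for every admissible $\alpha$.

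Next I would check the only remaining hypothesis of Proposition \ref{alphas}, namely $N(S)\le 0$ for all $S$. Writing $N(S)=\sum_{t=1}^T\bigl(P_t^N(\sigma^N)-P_t^S(\sigma^S)\bigr)d_t^S$, I would apply Lemma \ref{CSIG} with $R=N$: for each $t$ with $d_t^S\neq 0$ it yields $P_t^S(\sigma^S)\ge P_t^N(\sigma^N)$, while the summands with $d_t^S=0$ vanish. As $d_t^S\ge 0$, every term is nonpositive, so $N(S)\le 0$. With $N(S)\le 0$ and $M(S)=0$ in hand, Proposition \ref{alphas} delivers $\omega(\alpha)\in Core(N,c)$, and since $\psi=\omega(\alpha)$ this is exactly the claim.

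Should a self-contained argument be preferred, the same two ingredients give it directly. Efficiency follows from $\sum_{i\in J_t}\frac{k_t^N}{|J_t|}=k_t^N$ and $\Upsilon=T^N$, so that $\psi(N)=\sum_{t=1}^T P_t^N(\sigma^N)d_t^N+\sum_{t\in T^N}k_t^N=c(N)$; stability is obtained by splitting $\psi(S)$ into its variable part, bounded by $P^S(\sigma^S)'d^S$ through Lemma \ref{CSIG}, and its setup part $\sum_{t\in\Upsilon}|S\cap J_t|\frac{k_t^N}{|J_t|}$, bounded by $\sum_{t\in T^S}k_t^S$ using $k_t^N\le k_t^S$ and $|S\cap J_t|\le|J_t|$. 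I expect the only delicate point to be the monotonicity $P_t^N(\sigma^N)\le P_t^S(\sigma^S)$ underlying $N(S)\le 0$; this is precisely what Lemma \ref{CSIG} is designed to supply, which is why it is placed immediately before the statement, and everything else is routine bookkeeping once consolidation has been used to eliminate the $\alpha$-dependent term and to collapse $\Upsilon$ onto $T^N$.
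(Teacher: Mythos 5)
Your proof is correct and follows essentially the same route as the paper: consolidation kills the $\alpha$-dependent term (so $\psi=\omega(\alpha)$) and forces $M(S)=0$, while Lemma \ref{CSIG} with $R=N$ gives $N(S)\le 0$, so Proposition \ref{alphas} applies. Your write-up is in fact slightly more careful than the paper's (you justify $\Upsilon=T^N$ and handle the $d_t^S=0$ terms explicitly), but the argument is the same.
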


\begin{proof}
We suppose that $(N,D,Z)$ is consolidated. Note that if $T^{S}\subseteq
\Upsilon $ for $S\subseteq N$ then $M(S)=0$ and $\sum_{t\in T^{N}\setminus
\Upsilon }k_{t}^{N}=0.$ We know $\sigma _{t}^{S}\in \Upsilon $ for all $t\in
T$ and for all $S\subseteq N$ with $d_{t}^{S}\neq 0,$ then by Lemma \ref%
{CSIG}, $P^{S}(\sigma^{S})\geq P^{N}(\sigma ^{N}).$ Hence $N(S)\leq 0$ for
all $S\subseteq N$, By Proposition \ref{alphas} $\psi $ is a core-allocation.
\end{proof}

\begin{corollary}
Let $(N,D,Z)$ be a SI-situation and $(N,c)$ the corresponding SI-game. If
the reduced SI-situation $(N,D,\widetilde{Z})$ is consolidated, then $\psi
\in Core(N,c)$.
\end{corollary}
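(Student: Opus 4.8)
The plan is to deduce this Corollary from the preceding Proposition by transferring the whole argument to the reduced game. That Proposition establishes that whenever a SI-situation is consolidated, the allocation $\psi$ lies in the core of its associated SI-game. The hypothesis here is not that $(N,D,Z)$ is consolidated, but that the reduced SI-situation $(N,D,\widetilde{Z})$ is; so the natural move is to apply that Proposition not to $(N,D,Z)$ itself but to $(N,D,\widetilde{Z})$, whose associated game is $(N,\widetilde{c})$.

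First I would apply the Proposition to $(N,D,\widetilde{Z})$: since it is consolidated by hypothesis, the corresponding allocation $\psi$ belongs to $Core(N,\widetilde{c})$. Here one must read the data entering the formula for $\psi$ relative to the reduced situation. The point worth checking is that this does not change the grand-coalition terms: the quantities $P_{t}^{N}(\sigma^{N})$ and $k_{t}^{N}$ are unaffected by the reduction, because the grand coalition faces exactly the same costs $p^{N},h^{N},b^{N},k^{N}$ in $(N,D,Z)$ and in $(N,D,\widetilde{Z})$ (each reduced cost already equals the coalitional minimum over $N$). In particular $\sigma^{N}$ is unchanged and $c(N)=\widetilde{c}(N)$, so the efficiency of $\psi$ is the same assertion in both games; only the sets $\Upsilon$ and $J_{t}$ are to be understood in the reduced situation, which is precisely where consolidation (and hence $\widetilde{M}(S)=0$ and $\widetilde{N}(S)\le 0$) is invoked.

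Second, I would invoke the inclusion $Core(N,\widetilde{c})\subseteq Core(N,c)$, which was already recorded in the excerpt and follows from $\widetilde{c}(S)\le c(S)$ for all $S\subset N$ together with $\widetilde{c}(N)=c(N)$. Chaining the two steps yields $\psi\in Core(N,\widetilde{c})\subseteq Core(N,c)$, which is exactly the desired conclusion.

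The proof is therefore genuinely short, and the only step requiring care — the part I expect to be the real, if modest, obstacle — is the bookkeeping in the first step. One must confirm that the allocation the Proposition produces for the reduced situation is indeed the $\psi$ named in the statement: that substituting the reduced costs leaves the grand-coalition terms $\sum_{t} P_{t}^{N}(\sigma^{N})d_{t}^{i}$ and $k_{t}^{N}$ intact, while the consolidation structure ($\Upsilon$, $J_{t}$, and the vanishing of $\widetilde{M}(S)$) is exactly the one guaranteed by consolidation of $(N,D,\widetilde{Z})$. Once that identification is made, the core inclusion $Core(N,\widetilde{c})\subseteq Core(N,c)$ carries out all the remaining work.
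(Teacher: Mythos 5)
Your proposal is correct and is exactly the argument the paper intends (the corollary is stated without proof precisely because it follows by applying the preceding Proposition to the reduced situation $(N,D,\widetilde{Z})$ and then invoking the recorded inclusion $Core(N,\widetilde{c})\subseteq Core(N,c)$). Your care in checking that the grand-coalition data $P_t^N(\sigma^N)$, $k_t^N$ are unchanged under reduction, while $\Upsilon$ and $J_t$ must be read in the reduced situation, is the right bookkeeping.
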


From now on, the allocation $\psi$ will be called the extended Owen point
for consolidated SI-games (those which come from consolidated
SI-situations). \medskip

Recall that according to Sprumont (1990), a population monotonic allocation
scheme (pmas), for the game $(N,c)$ is a collection of vectors $y^{S}\in 
\mathbb{R}^{s}$ for all $S\subseteq N,S\neq \varnothing $ {} such that $%
y^{S}(S)=c(S)$ for all $S\subseteq N,S\neq \varnothing ,$ and $y_{i}^{S}\geq
y_{i}^{T}$ for all $S\subseteq T\subseteq N$ and $i\in S.$ The reader may
note that whenever $\left( y^{S}\right) _{\varnothing \neq S\subseteq N}$ is
a pmas for $(N,c),$ then $y^{S}$ is a core allocation for the game $(S,c_s)$
for all $S\subseteq N,S\neq \varnothing .$ Thus, cost allocations attainded
through a pmas are a refinement of the core. This implies that every cost TU
game with a pmas is totally balanced but the reciprocal is not true and
there are many totally balanced cost TU games without pmas. (A
core-allocation for $(N,c),$ i.e. $x\in Core(N,c)$, is reached through a
pmas if there exists $\left( y^{S}\right) _{\varnothing \neq S\subseteq N}$
for the game $(N,c)$ such that $y_{i}^{N}=x_{i}$ for all $i\in N.$) \medskip

The final result of the section explicitly constructs a pmas that realizes
the extended Owen point for consolidated SI-games.

\begin{theorem}
Let $(N,D,Z)$ be a consolidated SI-situation and $(N,c)$ the corresponding
SI-game. Then, $\psi $ can be realized through a pmas.
\end{theorem}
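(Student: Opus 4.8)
The plan is to exhibit a population monotonic allocation scheme whose top-level vector is $\psi$. The natural candidate mirrors $\psi$ at every coalition: for each nonempty $S\subseteq N$ define the vector $y^{S}\in\mathbb{R}^{S}$ by
\[
y_{i}^{S}:=\sum_{t=1}^{T}P_{t}^{S}(\sigma ^{S})d_{t}^{i}+\sum_{t:\,i\in J_{t}\cap S}\frac{k_{t}^{S}}{\left\vert J_{t}\cap S\right\vert },\qquad i\in S,
\]
where $J_{t}=\{j\in N:\delta_{t}(\sigma^{j})=1\}$ as before. By construction $y^{N}=\psi$, so it suffices to verify the two defining properties of a pmas recalled above: efficiency $y^{S}(S)=c(S)$ for every $\varnothing\neq S\subseteq N$, and population monotonicity $y_{i}^{S}\geq y_{i}^{T}$ for all $i\in S\subseteq T\subseteq N$.

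First I would handle efficiency. Since every subgame of an SI-game is again an SI-game, the restriction of $(N,D,Z)$ to the players in $S$ is an SI-situation, and I would show it inherits the consolidation property: if a member $i\in S$ orders at period $t$ on its own, then every coalition inside $S$ containing $i$ also orders at $t$. Using this I would establish the key characterization that in a consolidated situation the order periods of $S$ are exactly those touched by some individual order of a member of $S$, that is $T^{S}=\{t:J_{t}\cap S\neq\varnothing\}$. Summing $y^{S}$ over $i\in S$ then collapses the variable part to $P^{S}(\sigma^{S})'d^{S}$ and the setup part to $\sum_{t\in T^{S}}k_{t}^{S}$, which is precisely $c(S)$; equivalently, $y^{S}$ is the extended Owen point of the consolidated sub-situation, so its efficiency (and in fact its membership in $Core(S,c_{S})$) is delivered directly by Proposition \ref{alphas}. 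I expect this characterization of $T^{S}$ to be the main obstacle, since it is exactly the place where consolidation must be used to rule out a coalition ordering in a period that none of its members would use individually.

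It remains to check monotonicity term by term for $i\in S\subseteq T$. For the variable part, note that $d_{t}^{i}>0$ forces $d_{t}^{S}\neq 0$ (as $i\in S$), so Lemma \ref{CSIG} gives $P_{t}^{S}(\sigma^{S})\geq P_{t}^{T}(\sigma^{T})$ for every relevant $t$, while periods with $d_{t}^{i}=0$ contribute nothing; hence $\sum_{t}P_{t}^{S}(\sigma^{S})d_{t}^{i}\geq\sum_{t}P_{t}^{T}(\sigma^{T})d_{t}^{i}$. For the setup part, the index set $\{t:i\in J_{t}\cap S\}$ equals $\{t:i\in J_{t}\}$ because $i\in S\subseteq T$, so the two setup sums range over the very same periods; on each such $t$ I would compare the summands using that $k_{t}^{S}=\min_{j\in S}k_{t}^{j}\geq\min_{j\in T}k_{t}^{j}=k_{t}^{T}$ (a minimum over a smaller set is larger) together with $\left\vert J_{t}\cap S\right\vert\leq\left\vert J_{t}\cap T\right\vert$, which yield $k_{t}^{S}/\left\vert J_{t}\cap S\right\vert\geq k_{t}^{T}/\left\vert J_{t}\cap T\right\vert$. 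Adding the two monotone parts gives $y_{i}^{S}\geq y_{i}^{T}$, completing the verification that $(y^{S})_{\varnothing\neq S\subseteq N}$ is a pmas realizing $\psi$.
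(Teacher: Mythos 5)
Your proposal is essentially the paper's own proof: the same coalition-wise extended Owen point scheme $y^{S}$, with efficiency checked by collapsing the setup shares over $T^{S}$ and population monotonicity checked termwise, using Lemma \ref{CSIG} for the variable part and the inequalities $k_{t}^{S}\geq k_{t}^{R}$ and $\left\vert J_{t}\cap S\right\vert \leq \left\vert J_{t}\cap R\right\vert$ for the setup part. The step you flag as the main obstacle --- that under consolidation the ordering periods $T^{S}$ are exactly the periods at which some member of $S$ orders individually, so that the per-period shares sum to $\sum_{t\in T^{S}}k_{t}^{S}$ --- is precisely the identification the paper itself uses implicitly through its sets $\Upsilon^{S}$ and $J_{t}^{S}$, so your route does not diverge from (nor fall short of) the published argument.
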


\begin{proof}
Define for all $i\in S,S\subseteq N$ and $S\neq \varnothing ,$

\begin{equation*}
y_{i}^{S}:=\sum_{t=1}^{T}P_{t}^{S}(\sigma ^{S})d_{t}^{i}+\sum_{t\in \Upsilon
^{S}/i\in J_{t}^{S}}\frac{k_{t}^{S}}{\left\vert J_{t}^{S}\right\vert }.
\end{equation*}

\noindent where $\Upsilon ^{S}:=\{t\in T|\exists i\in S\mbox{ such
that }\delta _{t}(\sigma ^{S})=1$ $\mbox{ for
all }S\subseteq N\text{ with }i\in S\}$ and $J_{t}^{S}:=\{i\in S$ such that $%
\delta _{t}(\sigma ^{i})=1\}.$ Then for all $S\subseteq N,S\neq \varnothing $

\begin{equation*}
\sum_{i\in S}y_{i}^{S}=\sum_{t=1}^{T}P_{t}^{S}(\sigma
^{S})d_{t}^{S}+\sum_{t\in T^{S}}k_{t}^{S}=c(S),
\end{equation*}%
and for all $S\subseteq R\subseteq N,S,R\neq \varnothing $ and for all $i\in
S,$

\begin{eqnarray*}
y_{i}^{S} &=&\sum_{t=1}^{T}P_{t}^{S}(\sigma ^{S})d_{t}^{i}+\sum_{t\in
\Upsilon ^{S}/i\in J_{t}}\frac{k_{t}^{S}}{\left\vert J_{t}^{S}\right\vert }%
\geq \sum_{t=1}^{T}P_{t}^{R}(\sigma ^{R})d_{t}^{i}+\sum_{t\in \Upsilon
^{S}/i\in J_{t}^{S}}\frac{k_{t}^{R}}{\left\vert J_{t}^{S}\right\vert } \\
&\geq &\sum_{t=1}^{T}P_{t}^{R}(\sigma ^{R})d_{t}^{i}+\sum_{t\in \Upsilon
^{R}/i\in J_{t}^{R}}\frac{k_{t}^{R}}{\left\vert J_{t}^{R}\right\vert }%
=y_{i}^{R},
\end{eqnarray*}%
since $\Upsilon ^{R}\subseteq \Upsilon ^{S}$ and $\left\vert
J_{t}^{S}\right\vert \leq \left\vert J_{t}^{R}\right\vert $ for all $t\in T$.

Finally, we see that $y_{i}^{N}=\psi _{i}$ for all $i\in N.$ So, the
extended Owen point for consolidated SI-situations $\psi $ can be reached
through the pmas $\left( y^{S}\right) _{\varnothing \neq S\subseteq N}.$
\end{proof}

\medskip

From the proof of the above theorem we deduce that for every consolidated
SI-game, a pmas can be built just taking the extended Owen point for each
subgame and gathering them all as a collection of vectors. Notice that this
construction shows a strong consistency, in terms of stability, of this
point solution. \medskip

The final example illustrates all the above mentioned results. In addition,
it shows that the core of consolidated SI-games is not necessarily a
singleton.

\begin{example}
Consider the following SI-situation with three periods and three players:%
\begin{equation*}
\begin{array}{|c|c|c|c||c|c|c||c|c|c||c|c|c||c|c|c|}
\hline
& \multicolumn{3}{|c||}{Demand} & \multicolumn{3}{|c||}{Purchasing} & 
\multicolumn{3}{|c||}{Inventory} & \multicolumn{3}{|c||}{Backlogging} & 
\multicolumn{3}{|c|}{Setup} \\ \hline
P1 & $1$ & $3$ & $1$ & $1$ & $1$ & $1$ & $1$ & $1$ & $1$ & $1$ & $1$ & $1$ & 
$1$ & $1$ & $5$ \\ \hline
P2 & $2$ & $1$ & $1$ & $2$ & $3$ & $4$ & $1$ & $1$ & $1$ & $1$ & $1$ & $1$ & 
$1$ & $1$ & $5$ \\ \hline
P3 & $2$ & $1$ & $3$ & $2$ & $3$ & $5$ & $1$ & $1$ & $1$ & $1$ & $1$ & $1$ & 
$1$ & $1$ & $5$ \\ \hline
\end{array}%
\end{equation*}
The corresponding SI-game is given in the next table:%
\begin{equation*}
\begin{array}{|c|c|c|c||c|c|c||c|c|c||c|c|c||c|c|c||c||}
\hline
& d_{1}^{_{S}} & d_{2}^{_{S}} & d_{3}^{_{S}} & p_{1}^{_{S}} & p_{2}^{_{S}} & 
p_{3}^{_{S}} & h_{1}^{_{S}} & h_{2}^{_{S}} & h_{3}^{_{S}} & b_{1}^{_{S}} & 
b_{2}^{_{S}} & b_{3}^{_{S}} & k_{1}^{_{S}} & k_{2}^{_{S}} & k_{3}^{_{S}} & c
\\ \hline
\{1\} & 1 & 3 & 1 & 1 & 1 & 1 & 1 & 1 & 1 & 1 & 1 & 1 & 1 & 1 & 5 & 8 \\ 
\hline
\{2\} & 2 & 1 & 1 & 2 & 3 & 4 & 1 & 1 & 1 & 1 & 1 & 1 & 1 & 1 & 5 & 12 \\ 
\hline
\{3\} & 2 & 1 & 3 & 2 & 3 & 5 & 1 & 1 & 1 & 1 & 1 & 1 & 1 & 1 & 5 & 20 \\ 
\hline
\{1,2\} & 3 & 4 & 2 & 1 & 1 & 1 & 1 & 1 & 1 & 1 & 1 & 1 & 1 & 1 & 5 & 13 \\ 
\hline
\{1,3\} & 3 & 4 & 4 & 1 & 1 & 1 & 1 & 1 & 1 & 1 & 1 & 1 & 1 & 1 & 5 & 17 \\ 
\hline
\{2,3\} & 4 & 2 & 4 & 2 & 3 & 4 & 1 & 1 & 1 & 1 & 1 & 1 & 1 & 1 & 5 & 31 \\ 
\hline
\{1,2,3\} & 5 & 4 & 5 & 1 & 1 & 1 & 1 & 1 & 1 & 1 & 1 & 1 & 1 & 1 & 5 & 22
\\ \hline
\end{array}%
\end{equation*}

The reader may notice that it comes from a consolidate SI-situation since%
\begin{equation*}
\begin{array}{|c|c|c|c||c|c|c||c||}
\hline
& \sigma _{1}^{S} & \sigma _{2}^{S} & \sigma _{3}^{S} & P_{1}^{S}(\sigma
^{S}) & P_{2}^{S}(\sigma ^{S}) & P_{3}^{S}(\sigma ^{S}) & \delta (\sigma
^{S})^{\prime }k^{S} \\ \hline
\{1\} & 2 & 2 & 2 & 2 & 1 & 2 & 1 \\ \hline
\{2\} & 1 & 1 & 1 & 2 & 3 & 4 & 1 \\ \hline
\{3\} & 1 & 1 & 1 & 2 & 3 & 4 & 1 \\ \hline
\{1,2\} & 1 & 2 & 2 & 1 & 1 & 2 & 2 \\ \hline
\{1,3\} & 1 & 2 & 2 & 1 & 1 & 2 & 2 \\ \hline
\{2,3\} & 1 & 1 & 1 & 2 & 3 & 4 & 1 \\ \hline
\{1,2,3\} & 1 & 2 & 2 & 1 & 1 & 2 & 2 \\ \hline
\end{array}%
\end{equation*}%
The extended Owen point for the above consolidated SI-game is $\psi =\left(
7,\frac{11}{2},\frac{19}{2}\right) .$ However, the core of this game does
not reduce to it since also $x=(7,5,10)\in Core(N,c).$ In addition, the
extended Owen point can be reached through the pmas 
\begin{equation*}
\left( \left( 8\right) ^{\{1\}},\left( 12\right) ^{\{2\}},\left( 20\right)
^{\{3\}},\left( 7,6\right) ^{\{1,2\}},\left( 7,10\right) ^{\{1,3\}},\left( 
\frac{23}{2},\frac{39}{2}\right) ^{\{2,3\}},\left( 7,\frac{11}{2},\frac{19}{2%
}\right) ^{\{1,2,3\}}\right) .
\end{equation*}
\end{example}

\section{Concluding Remarks}

Cooperation in periodic review finite horizon inventory models has been
already analyzed in Guardiola et al. 2008, 2009 and Van Den Heuvel et al.
2007. This paper extends previous approaches in the literature considering a
more general model that includes non-homogeneous set up and backlogging
costs. We prove that this model of cooperation, by sharing technologies for
the production, carrying of goods and distribution channels, induces savings
because the resulting game is totally balanced. Moreover, we have introduced
a parametric family of allocations based on the Owen point (see Guardiola et
al. 2008, 2009) and a subclass of games that enjoys a population monotonic
allocation scheme.

The stability property of the above mentioned mode of coordination leads us
to mention two related future research lines: (1) analyzing the cooperation
aspects of broader subclasses of inventory situations for which it is
possible to provide explicit solutions; and (2) studying the relationships
between the cores that arise from situations with and without set up costs.

\section{Acknowledgments}

The research of the second author is partially supported Financial support
of the Ministerio de Ciencia, Innovaci\'{o}n y Universidades (MCIU), the
Agencia Estatal de Investigaci\'{o}n (AEI) and the Fondo Europeo de
Desarrollo Regional (FEDER) under the project {PGC2018-097965-B-I00}. The
research of the third author has been partially supported by Spanish
Ministry of Education and Science/FEDER grant number {MTM2016-74983-C02-01},
and projects {FEDER-US-1256951}, {CEI-3-FQM331} and \textit{NetmeetData}:
Ayudas Fundaci\'{o}n BBVA a equipos de investigaci\'{o}n cient\'{\i}fica
2019.

\end{document}